\author{Alexandros Angelopoulos, Eleni Bakali  \\\\ {\small Department of Computer Science,} \\{\small School of Electical and Computer Engineering,} \\{\small National Technical University of Athens, Greece.}\\
\texttt{\{angelop,mpakali\}@corelab.ntua.gr}}
\title{Exact uniform sampling over catalan structures}
\newtheorem{theorem}{Theorem}
\newtheorem{definition}{Definition}
\newenvironment{proof}{\noindent\textit{Proof}. }{\hfill $\Box$}
\def \set#1{\mathbb{#1}}
\definecolor{keyr}{rgb}{0.8,0,0}
\definecolor{keyg}{rgb}{0,0.5,0}
\definecolor{keyb}{rgb}{0,0,0.9}
\definecolor{key}{RGB}{127,40,40}
\newtheorem{proposition}[theorem]{Proposition}
\newtheorem{rul}[theorem]{Rule}
\begin{document}
\maketitle
\begin{abstract}
We present a new framework for creating elegant algorithms for exact uniform sampling of important Catalan structures, such as triangulations of convex polygons, Dyck words, monotonic lattice paths and mountain ranges. Along with sampling, we obtain optimal coding, and optimal number of random bits  required for the algorithm. The framework is based on an original two-parameter recursive relation, where Ballot and Catalan numbers appear and which may be regarded as to demonstrate a generalized reduction argument. We then describe (a) a unique $n\times n$ matrix to be used for any of the problems -the common pre-processing step of our framework- and (b) a linear height tree, where leaves correspond one by one to all distinct solutions of each problem; sampling is essentially done by selecting a path from the root to a leaf - the main algorithm. Our main algorithm is linear for a number of the problems mentioned. 
\end{abstract}

\paragraph{keywords}uniform sampling, optimal coding, catalan structures, convex triangulations, ballot numbers

\section{Introduction}
\label{sec:sampintro}
The consideration of different triangulations of a point set first appears in a well-studied form in the middle of the 18\textsuperscript{th} century by Leonhard Euler: he successfully conjectured the closed formula for the \emph{number of triangulations} of the \emph{convex} $n$-gon, that was what we denote now by $C_{n-2}$, the $(n-2)$-th \emph{Catalan number}. These numbers, named after the Belgian mathematician Eug\`{e}ne Charles Catalan, satisfy the following basic relations: 

\begin{align}
&C_{n+1} = \sum_{k=0}^{n} {C_kC_{n-k}},\quad C_0=1\\
&C_n = \frac{1}{n+1}{2n \choose n},\quad C_0=1 \\
\label{asympt} &C_n \leq \frac{4^n}{n^{3/2}\sqrt{\pi}}\quad  \text{ and } \quad \lim_{n\rightarrow \infty}\frac{4^n}{C_nn^{3/2}\sqrt{\pi}} =1
\end{align}

\noindent They occur as the solution of a very large number of counting problems in combinatorics. \cite{stanley2015} gives more than 200 interpretations, more than 60 exercises, in all, a spectacular volume of work centered around the Catalan numbers.


Some modern views regarding this core problem is the study of algorithms for efficient coding of triangulations, and for uniform generation or sampling a random triangulation. 

\textbf{Motivation-Applications.} Triangulations of polygons are fundamental structures with many applications e.g. in  Computational Geometry \cite{ber00, orou98}, VLSI floorplanning \cite{sai99}, and Graph Drawing \cite{nis04}. Many algorithms like point location, ray-shooting, visibility area computing and computing shortest paths inside simple polygons, are based on triangulations \cite{tar88, gui87a,gui87b,lee83}.
In particular sampling (or equivalently uniform generation) of random triangulations, apart from its theoretical interest, has many practical applications. Interestingly sampling triangulations was first considered by physicists for its applications in the study of two dimensional quantum gravity \cite{amb97}. 
Sampling  plays a central role in  testing and verifying the time complexity  of particular implementations of algorithms like the previously mentioned, that depend on triangulations. A method to achieve time verification is presented in \cite{eps94}. 
Finally efficient coding of triangulations, is essential for storing them after their random generation, since these data are usually huge and inefficient coding may cause storing to dominate the running time of the algorithm that uses these data.

\textbf{Previous work.} There has been significant work on uniform sampling and coding triangulations, especially since the 1990s \cite{eps94ran,ding05ran,soh97enc, dev99ran, pou06sam,sha06tri,sha11deg,sha11ctri}.

We will mostly compare our results and framework to \cite{dev99ran}: Devroye et al.\ describe a linear-time algorithm in the RAM model of computation, for the uniform generation of convex triangulations and monotone lattice paths. This is actually equivalent to a $\mathcal{O}(n\log{n})$ algorithm, when taking into account that this algorithm needs the generation of $O(n)$ random numbers of $O(\log n)$ bits each.
The ballot theorem is invoked in their work (see also \cite{feller1968}, Chapter III), as it describes lattice path problems. 

Another line of research regarding triangulations is the study of Markov chains that move along similar triangulations, most commonly considering as similar those  that differ by a single edge flip \cite{hur99gtri, mol97mix, tet97mix, par11gentri}. These Markov chains converge to the uniform distribution over the set of triangulations, thus yielding \textit{almost} uniform sampling algorithms (this is the well known Markov chain Monte Carlo method for sampling and counting). The best of the above mentioned algorithms runs in $\mathcal{O}(n^5)$, but has a deviation $\epsilon>0$ from the uniform distribution (which is a common drawback/characteristic of the MCMC method).

  While for convex graphs it is the Catalan numbers that give the exact number of triangulations, no such formula exists for a general graph. Exhaustively enumerating the triangulations is not an easy task: already $C_n=\Theta(n^{-3/2}4^n)$, while the best known bounds for the generic case are currently set a lower $\Omega(2.43^n)$ (\cite{sha11deg}) and an upper $\mathcal{O}(30^n)$ (\cite{sha11ctri}).   There is also no general formula for counting triangulations of convex polygons with forbidden edges, i.e. when the input is a general graph G embedded to the plane,  and valid triangulations are those that use only edges of G. Naturally, there has been work on counting the triangulations given a specific point set asymptotically faster than by enumerating all triangulations (\cite{alv13agg}), and approximately counting with favorable compromises (\cite{alv15count}). 
  
\subsection{Our contribution}
In this article, we present a unified framework to obtain algorithms for \textit{exact} uniform \textit{sampling} and \textit{optimal coding} of Catalan structures. We will use the triangulations of convex polygons as our main problem. We will then use the monotone lattice path problem to show that our framework can be tweaked to easily get the very algorithms presented in \cite{dev99ran}. In a nutshell, we present:   

\begin{itemize}
\item new framework and elegant algorithm for exact uniform sampling over convex triangulations and other Catalan structures;
\item new recursive relation for ballot and Catalan numbers with an interesting combinatorial interpretation;
\item separate pre-processing step of $\mathcal{O}(n^2)$ time, common to all problems that can be described by the recursive relation, 
\item optimal coding of the solutions/samples of the problems: for an input related to size $n$, we need exactly $\log(C_n)$ bits to encode each solution;
\item optimal number of random bits for the main sampling algorithm: it needs exactly $\log{C_n}$ random bits to run, i.e.\ to generate uniformly at random a solution;
\item efficient algorithm for sampling, as well as counting triangulations for a more general family of graphs embedded to the plane in convex position: $K_{n,-m}$ is obtained by an embedding of the complete graph $K_{n}$ after removing $m$ consecutive span-2 edges. 
\end{itemize}

Although our sampling algorithm has total running time $O(n^2)$, which is worse than the $O(n \log n)$ of \cite{dev99ran}, our preprocessing step helps to reduce to the optitmal the number of random bits needed, as well as the length of the codewords.

\paragraph*{An outline -  main ideas} 
Given $n$, we consider an initial instance which -in some sense- describes the space of all potential solutions; for instance, in the case of convex triangulations, we consider the complete convex graph $K_n$, as all $n(n-3)/2$ diagonals are in our disposal to be selected for some triangulation, which requires only $n-3$ such edges. Our goal is to describe\footnote{The construction is done only mentally, for the purposes of the analysis; it is not actually performed by the algorithm.} a binary tree of polynomial height, with its leaves corresponding one by one to all triangulations of the convex $n$-gon and then aim at sampling a leaf uniformly at random. 

In order to achieve this, we need a reduction argument, w.r.t.\ which the set of potential solutions on some node's graph can actually be partitioned into the solutions obtained either by its left or the right child and which are disjoint. Then, in order to sample a solution uniformly at random, it would suffice to be able to calculate the size of each of the two subtrees of any node \cite{knuth74, sin89apx}. Thus, starting from the root, we should be capable of uniformly selecting a leaf by recursively selecting one of a node's children with probability proportional to the size of the corresponding subtree.

Usually, the estimation of the size of subtrees for an arbitrary tree is computationally hard \cite{bak17}, and this is the reason that the above method usually fails. The innovation of our approach consists in that we manage to construct this tree in a systematic way that not only reflects our reduction argument, but also plenty of isomorphisms are revealed so that (a) the number of all classes of non-isomorphic instances is polynomial in $n$ (in particular quadratic), and (b) it is easy to determine the class in which each instance belongs. 
Eventually, it suffices to solve the general problem only for a polynomial number of non-isomorphic sub-instances, in a preprocessing step, bottom-up, in time linear to the number of non-isomorphic instances, thus in total quadratic in $n$. 

In particular, it occurs that all internal nodes of our tree correspond to (sub)instances of a more general problem, needing a second parameter to be described and for which no formula nor algorithm was known until now: compute the number of triangulations of an embedded to the plane \emph{almost complete} convex graph $K_{n,-m}$, a graph missing only $m$ \emph{consecutive span-2 edges} (or ears, as denoted in \cite{hur96ear,dev99ran}) from what would be a the complete $K_n$. As a first simple note, we get that $m\leq n$, thus  we have the $\mathcal{O}(n^2)$ classes of non-isomorphic instances.

In the Sections to follow, we will state and prove properties of the central to our work recursive relation, the sampling algorithm scheme it implies and its application to convex triangulations and mountain ranges problems (the latest to be trivially linked also to Dyck words, monotone lattice paths, etc.). 

\section{Our framework}

Let $a_{n,m}$ denote the number of solutions corresponding to any problem parameterized by $n$, $m$, and consider the following recursive relation:

\begin{align}
\label{rel:bc1} a_{n,0}&=a_{n-1,0}+a_{n,1} \\ 
\label{rel:bc2} a_{n,m}&=a_{n-1,m-1}+a_{n,m+1}\\ 
\label{rel:bc3} a_{0,0}&=1 \text{ and } a_{n,n}=0, \, \forall i\geq 1\
\end{align}

\begin{theorem} Consider the ballot numbers $$N_{i,j}=\frac{i+1-j}{i+1+j}{i+1+j \choose j},\, i,j\geq 0$$
Along with defining $N_{i,-1}=0$, the ballot numbers satisfy \ref{rel:bc1}, \ref{rel:bc2} and \ref{rel:bc3} for $i=n$ and $j=n-1-m$.
\label{thm:bal}
\end{theorem}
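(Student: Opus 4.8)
The plan is to set $i=n$ and $j=n-1-m$ throughout and show that, under this change of variables, the two‑parameter recursion collapses onto well‑known identities for the ballot numbers. Concretely, I would first record the factorial form $N_{i,j}=\frac{(i+1-j)\,(i+j)!}{j!\,(i+1)!}$, obtained by cancelling the factor $i+1+j$ against $(i+1+j)!$ inside the binomial coefficient. Three things then need to be checked: that \ref{rel:bc2} becomes the Pascal‑type ballot recurrence, that \ref{rel:bc1} reduces to that same recurrence through a diagonal identity, and that the boundary data in \ref{rel:bc3} matches the stated conventions.

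For \ref{rel:bc2}, the substitution gives $a_{n,m}=N_{n,j}$, $a_{n-1,m-1}=N_{n-1,j}$ and $a_{n,m+1}=N_{n,j-1}$, so the claim amounts to the single recurrence $N_{i,j}=N_{i-1,j}+N_{i,j-1}$. This is the computational heart of the proof, but it is routine: putting the two right‑hand terms over the common denominator $j!\,(i+1)!$ reduces everything to the polynomial identity $(i-j)(i+1)+(i+2-j)\,j=(i+1-j)(i+j)$, with both sides expanding to $i^2+i+j-j^2$. I would present exactly this one‑line verification.

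For \ref{rel:bc1}, the substitution yields $N_{n,n-1}=N_{n-1,n-2}+N_{n,n-2}$, which at first sight differs from the ballot‑recurrence instance $N_{n,n-1}=N_{n-1,n-1}+N_{n,n-2}$. The two agree precisely because of the diagonal identity $N_{i,i}=N_{i,i-1}=C_i$, which I would verify directly by checking that both quantities collapse to $\frac{(2i)!}{i!\,(i+1)!}$. Thus \ref{rel:bc1} is nothing but the $j=n-1$ instance of the recurrence rewritten through this reflection. Finally, \ref{rel:bc3} splits into $a_{n,n}=N_{n,-1}=0$, immediate from the convention $N_{i,-1}=0$, and the initial value $a_{0,0}=1$, which I would match against $N_{0,0}=1$.

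The main obstacle is not the algebra but the boundary bookkeeping at the corner. The diagonal identity $N_{i,i}=N_{i,i-1}$ holds only for $i\geq 1$ — it fails at $i=0$, where $N_{0,0}=1\neq 0=N_{0,-1}$ — so the cell $(n,m)=(0,0)$ cannot be read off the formula through $j=n-1-m$ and must be installed as a genuine base case. I would therefore verify the recurrences for $n\geq 2$, where the reflection is available, and then check the small cases $n=0,1$ by hand, confirming that seeding the recursion with $a_{0,0}=1$ is exactly what reproduces $a_{n,0}=N_{n,n-1}=C_n$; a short induction on $n$ then assembles these pieces into the full statement.
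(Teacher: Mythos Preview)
Your proposal is correct and largely parallels the paper's proof: both establish the general ballot recurrence $N_{i,j}=N_{i-1,j}+N_{i,j-1}$ by direct algebraic manipulation, which disposes of \ref{rel:bc2}. The one organizational difference lies in the treatment of \ref{rel:bc1}. The paper carries out a second, self-contained computation, reducing $N_{n,n-1}=N_{n-1,n-2}+N_{n,n-2}$ line by line to the arithmetic identity $4n-2=(n+1)+3(n-1)$. You instead derive it from the already-proved general recurrence together with the diagonal reflection $N_{i,i}=N_{i,i-1}$ (valid for $i\geq 1$), which is a one-line check. This is more economical and makes visible that the $m=0$ case is not structurally special. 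Your explicit flagging of the corner $(n,m)=(0,0)$, where $j=n-1-m=-1$ and the convention $N_{0,-1}=0$ does \emph{not} recover $a_{0,0}=1$, is also more careful than the paper's ``marginal cases are trivial'' dismissal; the identification $a_{n,m}=N_{n,n-1-m}$ should indeed be read for $n\geq 1$, with $a_{0,0}=1$ supplied as a separate seed.
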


In other words, $a_{n,m}=N_{n,n-1-m}$. Note also that:
\begin{equation}
a_{n,0}=N_{n,n-1}=\frac{2}{2n}{2n \choose n-1}=\frac{1}{n+1}{2n \choose n}=C_n
\label{eq:bc}
\end{equation}

We will from now on refer to the relations \ref{rel:bc1}, \ref{rel:bc2} and \ref{rel:bc3} as the \emph{BC (recursive) relation or recursion}, a naming derived from its close relationship with ballot and Catalan numbers.

\begin{theorem}
The generating function for the numbers $a_{n,m}$ is $$G(x,y)=\frac{(1-xy)(1-\sqrt{1 - 4x}-2xy)}{2 x (1 - y + x y^2)}$$
\label{thm:gen}
\end{theorem}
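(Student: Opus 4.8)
The plan is to introduce the bivariate generating function $G(x,y)=\sum_{n\ge0}\sum_{m\ge0}a_{n,m}x^ny^m$, with the convention that $a_{n,m}=0$ outside its support, translate the BC recursion into a single functional equation for $G$, and solve it. First I would record the support implied by \ref{rel:bc3}: the only nonzero entry on the diagonal is $a_{0,0}=1$, so the $n=0$ slice contributes only the constant term $1$, while every row $n\ge1$ is supported on $0\le m\le n-1$. This is what lets me compute the boundary sums cleanly.

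Next I would multiply \ref{rel:bc1} by $x^n$ and sum over $n\ge1$, multiply \ref{rel:bc2} by $x^ny^m$ and sum over $n\ge1,\,m\ge1$, and add the two. The left-hand side assembles into $G-1$. On the right, the diagonal-shift term $\sum a_{n-1,m-1}x^ny^m$ becomes $xyG$; the two sums $\sum_{n\ge1}a_{n,1}x^n$ and $\sum a_{n,m+1}x^ny^m$ combine (after re-indexing $k=m+1$) into $\tfrac1y\bigl(G-A_0(x)\bigr)$; and $\sum_{n\ge1}a_{n-1,0}x^n$ becomes $xA_0(x)$, where $A_0(x)=\sum_{n\ge0}a_{n,0}x^n$ is the still-unknown $m=0$ column series. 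Collecting the $G$-terms and clearing the factor $y$ then yields the linear solution
\begin{equation*}
G(x,y)=\frac{A_0(x)(1-xy)-y+xy^2}{1-y+xy^2}.
\end{equation*}

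The delicate point — and the step I expect to be the main obstacle — is that the generic recursion \ref{rel:bc2} fails at the corner cell $(n,m)=(1,1)$: there \ref{rel:bc3} forces $a_{1,1}=0$, whereas its right-hand side $a_{0,0}+a_{1,2}$ equals $1$. Summing \ref{rel:bc2} blindly over all $n,m\ge1$ therefore injects a spurious $+xy$ that must be subtracted by hand; this single correction is precisely what produces the $+xy^2$ in the numerator above, so tracking it carefully is essential for matching the claimed formula (omitting it changes $G$ by $xy^2/(1-y+xy^2)$).

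Finally I would identify the boundary series. By \ref{eq:bc} we have $a_{n,0}=C_n$, so $A_0(x)=\sum_{n\ge0}C_nx^n=\tfrac{1-\sqrt{1-4x}}{2x}$ is the Catalan generating function. Substituting this and factoring $(1-xy)$ out of the numerator gives
\begin{equation*}
G(x,y)=\frac{(1-xy)\bigl(1-\sqrt{1-4x}-2xy\bigr)}{2x(1-y+xy^2)},
\end{equation*}
as claimed. As a self-contained alternative that avoids quoting $A_0$, one could pin it down by the kernel method: the kernel $1-y+xy^2$ has a unique small root $y=Y(x)$ analytic at the origin, and requiring the numerator to vanish there forces $A_0(x)=\tfrac{1-\sqrt{1-4x}}{2x}$, recovering the Catalan generating function directly from the recursion. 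I would close by expanding the result to low order and checking its coefficients against small ballot numbers as a sanity check.
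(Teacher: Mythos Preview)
Your argument is correct and lands on the same formula; the overall strategy---turn the BC recursion into a linear equation for the bivariate series and then feed in the Catalan generating function for the $m=0$ column---is the same as the paper's. The execution differs in two respects. First, the paper avoids your $(1,1)$ correction by an index shift: it works with $B(x,y)=\sum_{i,j\ge0}a_{i+1,j}x^iy^j$ (so that $G=xB+1$) and decomposes $B$ column by column as $B=\sum_k A_k$, deriving $A_0=\tfrac1y A_1+xA_0+1$ and $A_k=\tfrac1y A_{k+1}+xyA_{k-1}$ for $k\ge1$; with this shift the boundary row $n=0$ never enters the sums, so the anomalous cell $(1,1)$ is absorbed automatically and no ad hoc subtraction is needed. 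Second, your kernel-method remark---forcing the numerator to vanish at the small root of $1-y+xy^2$ to recover $A_0(x)=\tfrac{1-\sqrt{1-4x}}{2x}$---is a genuine addition: the paper simply quotes $a_{n,0}=C_n$ from~\eqref{eq:bc} and substitutes. Your route is a bit more direct (one functional equation instead of a column decomposition), at the price of having to spot and justify the single correction term; the paper's shift is slightly slicker bookkeeping but otherwise equivalent.
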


\begin{proof}[Proofs of Theorems \ref{thm:bal} and \ref{thm:gen}] The proofs can be found in Appendix \ref{sec:appa}.
\end{proof}

\paragraph*{The BC table}

Immediately, we may define a $n \times n$ table/matrix, with $a_{n,m}$ as entries. It can be calculated in $\mathcal{O}(n^2)$ time, bottom-up (Figure \ref{fig:table}), and will be considered as our framework's pre-processing step. The main algorithm will simply recall some of its values when needed, in constant time. 

\begin{figure}[h]
\centering
\begin{tikzpicture}[ampersand replacement=\&]
\matrix (m) [scale=1.25, matrix of nodes ,row sep=-\pgflinewidth, text width=4.5ex, text height=1.75ex, align=center,anchor=center]
{
\hline
\begin{scope} \tikz\node[overlay] at (-1.125ex,-0.5ex){\scriptsize n};\tikz\node[overlay] at (0.5ex,1ex){\scriptsize m}; \end{scope} \& 0 \& 1 \& 2 \& 3 \& 4  \& 5 \& 6 \& 7 \& 8 \& $\cdots$ \\ \hline
0 \& |[fill=keyg!50]|\textbf{1} \&  \&  \& \& \&  \&  \&  \&  \& \\
1 \& \textbf{1} \& |[fill=keyr!40]| 0 \& \& \& \&  \& \& \&  \&   \\
2 \& \textbf{2} \& 1 \& |[fill=keyr!40]|  0 \& \&\&  \&  \&  \&  \&  \\
3 \& \textbf{5} \& 3 \& 1 \& |[fill=keyr!40]|  0 \&  \& \&  \&  \& \&  \\
4 \& \textbf{14} \& 9 \& 4 \& 1\& |[fill=keyr!40]|  0 \& \&\&  \& \&  \\
5 \& \textbf{42} \& 28 \& 14 \& 5\& 1 \& |[fill=keyr!40]|  0 \& \&  \&\& \\
6 \& \textbf{132} \& 90\&  48 \& 20 \& 6 \& 1 \& |[fill=keyr!40]|  0 \&  \& \& \\
7 \& \textbf{429} \& 297 \& 165 \&  75 \& 27 \& 7 \& 1\& |[fill=keyr!40]|  0   \& \& \\
8 \& \textbf{1430}  \& 1001 \& 572 \& 275 \&  110 \& 35 \& 8 \& 1\& |[fill=keyr!40]|  0   \& \\
$\vdots$  \& $\uparrow$ \& \& \& \& \&  \&  \&  \&  \&  $\ddots$  \\
\& $\bm{C_n}$ \& \& \& \& \&  \&  \&  \&  \&   \\
};


\draw (m-1-1.north east) -- (m-11-1.south east);
\draw (m-1-1.north west) -- (m-11-1.south west);
\draw (m-1-1.north west) -- (m-2-2.north west);


\draw [thick, opacity=0.35, ->] (m-3-2.center) -- (m-4-3.center);
\draw [thick, opacity=0.35, ->] (m-4-3.center) -- (m-4-2.center);

\draw [thick, opacity=0.35, ->] (m-4-2.center) -- (m-5-4.center);
\draw [thick, opacity=0.35, ->] (m-5-4.center) -- (m-5-2.center);

\draw [thick, opacity=0.35, ->] (m-5-2.center) -- (m-6-5.center);
\draw [thick, opacity=0.35, dotted, ->] (m-6-5.center) -- (m-6-3.east);
\end{tikzpicture}
\caption{The BC table. It has $(n-3)(n-1)+1$ entries. The arrows show the sequence in which the cells fill up.}
\label{fig:table}
\end{figure}

\paragraph*{The main algorithm scheme}

Consider now the binary tree implied by the BC relation (Figure \ref{fig:sampalgo}), rooted at a node related to $a_{n,0}$. This means we are attempting to randomly select one of $C_n$ samples. The BC table allows us to achieve:

\begin{itemize}
\item optimal coding, requiring exactly $\log(C_n)=\mathcal{O}(n)$ bits to encode each solution; 
\item optimal number of random bits for the main sampling algorithm: generating one random number in the range $[0..C_n-1]$, which requires $\log{C_n}$ random bits, suffices.
\end{itemize}

A common approach is to calculate the sizes of the two sub-trees hanging from some node \emph{upon arriving on the very node} and branch left or right with the appropriate probability. This requires a total of (number of branchings)$\times$(random bits required per branching) random bits. Indeed, we may well apply the algorithm of \cite{dev99ran}, given that $\frac{a_{n-1,m-1}}{a_{n,m}}\approx 2(n+m)$, the number of random bits needed is $\mathcal{O}(n \log{n})$, as the height of the tree is linear to $n$, thus the branchings required to reach a leaf is linear in $n$, too. 

It is evident that establishing the BC table beforehand, and tossing only $\log{C_n}\approx 4n$ coins (see (\ref{asympt})) we generate a code uniformly at random, which uniquely defines the leaf of the tree the algorithm will end upon: we branch without any more coin tosses, as we know all critical quantities in advance. 

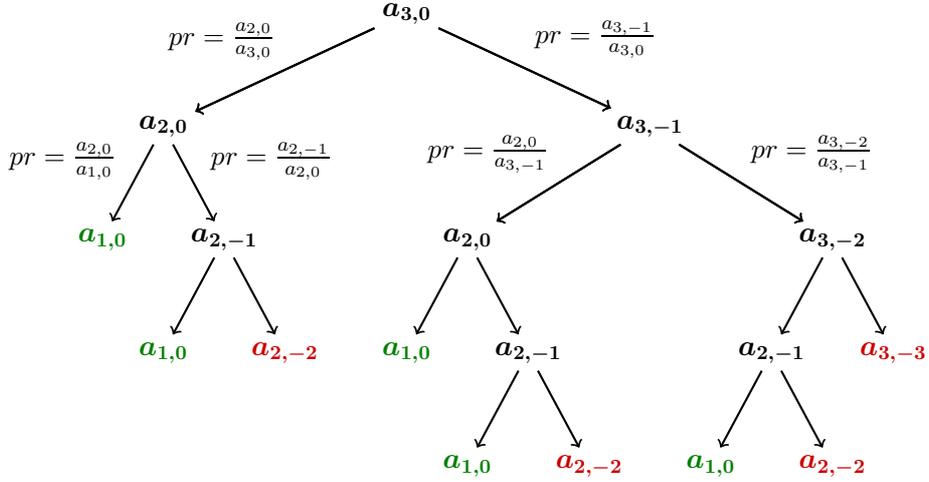
\begin{figure}[h!]
\centering
\begin{tikzpicture}
 [every node/.style={rectangle, draw=none, fill=none, scale=0.75}]

\node (k5) at (0.2,0) {\Large $\bm{a_{3,0}}$};
\node (k54) at (-3,-1.5){\Large $\bm{a_{2,0}}$};
\node (k551) at (3.4,-1.5){\Large $\bm{a_{3,-1}}$};
\node [keyg] (k543) at (-3.8,-3) {\Large $\bm{a_{1,0}}$};
\node (k5441) at (-2.2,-3) {\Large $\bm{a_{2,-1}}$};
\node [keyg] (k54413) at (-3,-4.5) {\Large $\bm{a_{1,0}}$};
\node [keyr] (k544142) at (-1.4,-4.5){\Large $\bm{a_{2,-2}}$};
\node (k5514) at (1,-3) {\Large $\bm{a_{2,0}}$};
\node [keyg] (k55143) at (0.2,-4.5) {\Large $\bm{a_{1,0}}$};
\node (k551441) at (1.8,-4.5) {\Large $\bm{a_{2,-1}}$};
\node[keyg]  (k5514413) at (1,-6) {\Large $\bm{a_{1,0}}$};
\node [keyr] (k55144142) at (2.6,-6) {\Large $\bm{a_{2,-2}}$};
\node (k55152) at (5.8,-3) {\Large $\bm{a_{3,-2}}$};
\node [keyr] (k5515253) at (6.6,-4.5) {\Large $\bm{a_{3,-3}}$};
\node (k5515241) at (5,-4.5) {\Large $\bm{a_{2,-1}}$};
\node[keyg]  (k55152413) at (4.2,-6) {\Large $\bm{a_{1,0}}$};
\node [keyr] (k551524142) at (5.8,-6) {\Large $\bm{a_{2,-2}}$};

\foreach \x/\y in {k5441/k54413, k5441/k544142,k5514/k55143, k5514/k551441, k551441/k5514413, k551441/k55144142, k55152/k5515241,k55152/k5515253, k5515241/k55152413, k5515241/k551524142}
           \draw [thick, ->] (\x)--(\y);     
   \foreach \x/\y in {k5/k54,k5/k551,k551/k5514,k551/k55152}
                \draw [thick, ->] (\x)--(\y);

\draw [thick, ->] (k5)--(k54) node [rectangle, pos=0.5, anchor=south east, scale=1.25] {$pr=\frac{a_{2,0}}{a_{3,0}}$};
\draw [thick, ->] (k5)--(k551) node [rectangle, pos=0.5, anchor=south west, scale=1.25] {$pr=\frac{a_{3,-1}}{a_{3,0}}$};

\draw [thick, ->] (k551)--(k55152) node [rectangle, pos=0.5, anchor=south west, scale=1.25] {$pr=\frac{a_{3,-2}}{a_{3,-1}}$};

\draw [thick, ->] (k551)--(k5514) node [rectangle, pos=0.5, anchor=south east, scale=1.25] {$pr=\frac{a_{2,0}}{a_{3,-1}}$};

\draw [thick, ->] (k54)--(k5441) node [rectangle, pos=0.5, anchor=south west, scale=1.25, yshift=-1mm, xshift=1mm] {$pr=\frac{a_{2,-1}}{a_{2,0}}$};

\draw [thick, ->] (k54)--(k543) node [rectangle, pos=0.5, anchor=south east, scale=1.25, yshift=-1mm, xshift=-1mm] {$pr=\frac{a_{2,0}}{a_{1,0}}$};
                
\end{tikzpicture}
\caption{The universal sampling algorithm scheme: branching with probability analogous to the size of the subtree. The height of the tree is $\mathcal{O}(n)$. We note that typically all $a_{n,n}$ nodes are non-existent, as the algorithm branches towards them with probability 0 - that's why they are marked red. Also, all $a_{1,0}$ labeled nodes have a left child, $a_{0,0}$, which is the actual leaf according to the BC relation.}
\label{fig:sampalgo}
\end{figure}

Moreover, the scheme allows for down to \emph{linear time} main algorithms for sampling an instances, as all leaves' distance from the root is linear in $n$ - it actually fluctuates from $n$ to $2n$. For instance, we will show that the mountain ranges problem (and, therefore, Dyck words and monotone lattice paths) has an $\mathcal{O}(n)$ main sampling algorithm, which we consider a good trade-off when we need many samples for the same problem. Remember, that only a $\mathcal{O}(n \log{n})$ algorithm is proposed in \cite{dev99ran}.

\section{Convex triangulations}

As we have already mentioned, for any problem we consider the object which describes the space of all solutions. In the case of convex triangulations, this is the \emph{complete geometric graph}. A geometric graph is defined \cite{bosemain} as a pair of a point set $V$ on $\set{R}^2$ and a subset of straight line segments with endpoints in $V$, i.e.\ $E \subseteq {V \choose 2}$. Now let our point set be convex and $E = {V \choose 2}$. We may refer to this special case as the \emph{``convex $K_n$''}, an alternate to the complete convex geometric graph, taking into account that all convex drawings of $K_n$ are isomorphic. Some important definitions are the following.

\paragraph*{Proper labeling, consecutive vertices and span-2 edges} 

Given a convex geometrical graph, we will say that its vertex labeling $v_0,...,v_{n-1}$ is \emph{proper} if its $n$ vertices appear in order, clockwise or counter-clockwise around its convex hull. We will always assume such a labeling for any convex graph we refer to, unless noted otherwise. The above being stated, it is evident that for all $i \in [0..n-1]$, vertices $v_i$ and $v_{i+1}$ are consecutive, naturally defining that $v_n \equiv v_0$. We shall also admit that all edges $v_iv_{i+2}$, $i \in [0..n-1]$ are properly defined, as $v_x \equiv v_{x\text{ mod } n}$, for any $x,n$. 

Let $V$ be a convex point set and assume proper vertex labeling. For every edge (diagonal) $e=v_iv_j \in E$, we define its \emph{span}, denoted by $|e|$, to be the minimum distance of its adjacent vertices around the convex hull. The edges that miss only one vertex, that is edges of the form $v_{i-1}v_{i+1}$, will hold a key-role in our work, thus we will denote them as \emph{span-2 edges}. All such edges can be named after the missed vertex with which they form a triangle: $e_i \equiv v_{i-1}v_{i+1}$.

As an analog to the notion of consecutive vertices, we define two \emph{consecutive span-2 edges} to be a pair of edges of the form $v_{i-1}v_{i+1}$, $v_iv_{i+2}$. 

Observe that: $K_3$ has no span-2 edges, $K_4$ has two and, for $n\geq 5$, $K_n$ has exactly $n$ of such edges. For $n\geq 5$ and any vertex $v_i$, we denote by $e_i$ the span-2 edge $v_{i-1}v_{i+1}$, in other words, the edge that may form a triangle together with $v_i$.

\paragraph*{The $K_{n,-m}$ graph.} We will need to work on specific graphs, that allow for the desired structure of triangulations to be witnessed. We define $K_{n,-m}$ to be the nearly complete convex $K_n$ which \emph{misses only $m$ consecutive span-2 edges}. Therefore:
\begin{itemize}
\item $K_{n,0} \equiv K_n$ and when properly defined, $K_{n,-m}$ has ${n \choose 2}-m$ edges.
\item For $n\geq 5$, it is $-n \leq -m \leq 0$, as a graph may miss up to all $n$ span-2 edges. 
\item The graphs $K_{4,-1}$, $K_{4,-2}$ are properly defined. 
\item For fixed $n,m$, all $K_{n,-m}$ are isomorphic to each other (under rotation).

\end{itemize}

\begin{proposition}
Every triangulation of a convex geometric graph on 5 or more vertices has at least two span-2 edges. 
\label{pro:span22}
\end{proposition}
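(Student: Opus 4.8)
The plan is to identify span-2 edges with the \emph{ears} of the triangulation. Recall that if a triangulation of the convex $n$-gon contains the triangle $v_{i-1}v_iv_{i+1}$ with both $v_{i-1}v_i$ and $v_iv_{i+1}$ on the boundary (the convex hull), then its third side is exactly the span-2 edge $e_i=v_{i-1}v_{i+1}$; conversely, a span-2 edge $e_i$ appearing in a triangulation bounds such a triangle together with $v_i$. So a span-2 edge is precisely the \emph{base} of an ear, and it suffices to establish two things: (i) every triangulation of the convex $n$-gon has at least two ears, and (ii) for $n\geq 5$ distinct ears have distinct bases.

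For (i) I would argue via the dual tree. Any triangulation of the convex $n$-gon has $n-2$ triangles and $n-3$ diagonals, so its dual graph -- one node per triangle, one edge per shared diagonal -- is a tree on $n-2$ nodes. A node is a leaf exactly when its triangle shares only one diagonal with the rest, i.e.\ has two sides on the boundary, i.e.\ is an ear. For $n\geq 5$ this tree has $n-2\geq 3$ nodes, and every tree on at least two nodes has at least two leaves, giving at least two ears. (Equivalently, a double-counting argument works: if $t_k$ denotes the number of triangles with exactly $k$ boundary edges, then $t_0+t_1+t_2=n-2$ and $t_1+2t_2=n$, whence $t_2=t_0+2\geq 2$.)

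The point requiring genuine care -- and the reason the hypothesis is $n\geq 5$ rather than $n\geq 4$ -- is step (ii). Two ears with tips $v_i\neq v_j$ share a base only if $\{v_{i-1},v_{i+1}\}=\{v_{j-1},v_{j+1}\}$. The trivial matching $v_{i-1}=v_{j-1}$, $v_{i+1}=v_{j+1}$ forces $i=j$ and is excluded, so the only possibility is $v_{i-1}=v_{j+1}$ and $v_{i+1}=v_{j-1}$, i.e.\ $i\equiv j+2$ and $i\equiv j-2 \pmod n$, hence $4\equiv 0 \pmod n$. This is impossible for $n\geq 5$, so each ear contributes a distinct span-2 edge and the two ears guaranteed by (i) yield at least two distinct span-2 edges, as claimed. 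I expect this distinctness check, rather than the ear count, to be the main (though still short) obstacle: for $n=4$ exactly this collision occurs, the single diagonal being the common base of both ears, which is precisely why the statement fails there and the bound $n\geq 5$ is sharp.
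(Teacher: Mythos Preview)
Your proof is correct. In fact, the double-counting alternative you offer in (i) --- $t_0+t_1+t_2=n-2$ and $t_1+2t_2=n$, hence $t_2\geq 2$ --- is precisely the paper's argument, phrased there as a pigeonhole: $n$ boundary edges distributed among $n-2$ triangles force at least two triangles to carry two boundary edges each. Your primary route via the dual tree is a different (and equally standard) way to reach the same ear count; it has the mild advantage of making the ``at least two'' visible structurally (leaves of a tree) rather than arithmetically.

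Where you genuinely add something is step (ii). The paper dispatches the distinctness of the two span-2 edges in a single clause (``as soon as $n\geq 5$, those third edges do not coincide''), whereas you actually verify it: two ears share a base only if $4\equiv 0\pmod n$, impossible for $n\geq 5$. This is the right place to spend the extra line, since it is exactly what fails at $n=4$ and makes the hypothesis sharp.
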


\begin{proof}
A triangulation $T_C$ on $n$ points requires that all faces but the outer are triangles. The number of faces $f$ is equal to $n-1$ (e.g.\ use Euler's characteristic on plane graphs), therefore there is a total of $n-2$ inner faces/triangles. Since all $n$ non-diagonal edges of $T_C$ (the sides of the polygon) are sides of the $n-2$ triangles, by the pigeonhole principle, there are at least 2 triangles which use 2 sides of the $n$-gon as their sides. Then, those triangles' third edge must be an span-2 one and, as soon as $n\geq 5$, those third edges do not coincide.
\end{proof}

\begin{proposition}
A convex geometric graph with $n\geq 5$ and only two span-2 edges which appear \emph{consecutive} in the drawing has no triangulation. Equivalently, $T(K_{n,-(n-2)})=0$.
\label{pro:nminus2}
\end{proposition}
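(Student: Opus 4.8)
The plan is to combine Proposition \ref{pro:span22} with the elementary geometric observation that two \emph{consecutive} span-2 edges must cross each other. Since $K_{n,-(n-2)}$ is obtained from the convex $K_n$ by deleting $n-2$ consecutive span-2 edges, and $K_n$ has exactly $n$ span-2 edges for $n\geq 5$, exactly two span-2 edges survive; by construction they are consecutive, say $e_i=v_{i-1}v_{i+1}$ and $e_{i+1}=v_iv_{i+2}$. The first bookkeeping step is to record that these are the \emph{only} span-2 edges present in the graph, so any triangulation of $K_{n,-(n-2)}$ may use span-2 edges only from the set $\{e_i,e_{i+1}\}$.

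Next I would establish that $e_i$ and $e_{i+1}$ cross. The four vertices $v_{i-1},v_i,v_{i+1},v_{i+2}$ are consecutive on the convex hull, hence in convex position, and form a convex quadrilateral. Within this quadrilateral, $e_i=v_{i-1}v_{i+1}$ joins the first and third vertices while $e_{i+1}=v_iv_{i+2}$ joins the second and fourth; these are precisely its two diagonals, which intersect in its interior. Thus $e_i$ and $e_{i+1}$ properly cross.

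I would then invoke planarity. A triangulation is by definition a plane (crossing-free) subdivision, so it cannot contain two crossing segments; therefore at most one of $e_i, e_{i+1}$ can appear in any triangulation of $K_{n,-(n-2)}$. On the other hand, Proposition \ref{pro:span22} guarantees that every triangulation on $n\geq 5$ vertices contains at least two \emph{distinct} span-2 edges. Since the only available candidates are $e_i$ and $e_{i+1}$, and they are mutually exclusive, this requirement is impossible to satisfy, so no triangulation exists; equivalently $T(K_{n,-(n-2)})=0$.

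The argument presents no serious obstacle beyond correctly pinning down the crossing. The one point demanding care is the counting: one must verify that deleting $n-2$ \emph{consecutive} span-2 edges leaves exactly the two consecutive survivors $e_i,e_{i+1}$ and \emph{no third} span-2 edge, for it is this scarcity, together with Proposition \ref{pro:span22}'s demand for two distinct span-2 edges, that makes the situation genuinely contradictory rather than merely restrictive.
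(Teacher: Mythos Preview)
Your proposal is correct and follows essentially the same route as the paper: observe that the two surviving consecutive span-2 edges cross and hence cannot coexist in a triangulation, then invoke Proposition~\ref{pro:span22} to force the presence of at least two span-2 edges, yielding a contradiction. Your write-up is simply more explicit than the paper's about why consecutive span-2 edges cross (as diagonals of the convex quadrilateral on $v_{i-1},v_i,v_{i+1},v_{i+2}$), but the logical skeleton is identical.
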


\begin{proof} 
Two consecutive span-2 edges intersect, therefore together they do not form any triangulation, unless only one is needed (quadrilateral). So there must be a third span-2 edge which, together with one of the 2 consecutive ones, is the obligatory second span-2 edge in a triangulation $T$ of the graph (see Proposition \ref{pro:span22}). Contradiction, as the graph has no other such edge. 
\end{proof}

\subsection{The reduction argument}

Our reduction argument is primarily based on partitioning the number of triangulations of some convex graph $G$ into those which include a specific span-2 edge $e_i$ and those which do not (assuming there is at least one span-2 edge in $G$). Let us already point out the tree structure implied by the partitioning.

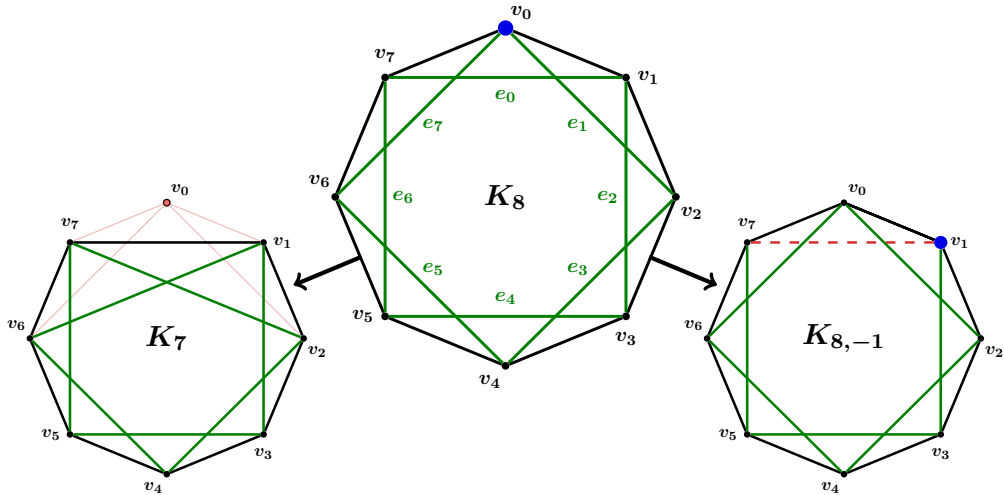
\begin{figure}[h!]
\centering
\begin{tikzpicture}
   [scale=0.9375, every node/.style={rectangle, draw=none, fill=none, scale=0.2}]
  
	\node (k8) [scale=7] at (0,1) {\begin{tikzpicture}
   [scale=0.8, every node/.style={circle, fill=black, scale=0.2}]
      \node [fill=keyb, scale=2]  (n0) at (90:2)  {};
  \node (n1) at (45:2) {};
  \node (n2) at (0:2) {};
  \node (n3) at (-45:2) {};
   \node (n4) at (-90:2) {};
  \node (n5) at (-135:2)  {};
   \node (n6) at (180:2) {};
  \node (n7) at (135:2)  {};
   \draw [black, thick] (n0)--(n1)--(n2)--(n3)--(n4)--(n5)--(n6)--(n7)--(n0);
   
     \foreach \x/\y in {n0/n2,n1/n3,n2/n4,n3/n5,n4/n6,n5/n7,n6/n0,n7/n1}
           \draw [keyg,thick] (\x)--(\y);    
        
        \node  [scale=2.5, fill=none, draw=none, keyg] at (90:1.2) {$\bm{e_0}$};
     \node [scale=2.5, fill=none, draw=none, keyg] at (45:1.2) {$\bm{e_1}$};
     \node [scale=2.5, fill=none, draw=none, keyg] at (0:1.2) {$\bm{e_2}$};
     \node [scale=2.5, fill=none, draw=none, keyg] at (-45:1.2) {$\bm{e_3}$};
     \node [scale=2.5, fill=none, draw=none, keyg] at (-90:1.2) {$\bm{e_4}$};
     \node [scale=2.5, fill=none, draw=none, keyg] at (-135:1.2) {$\bm{e_5}$};
     \node [scale=2.5, fill=none, draw=none, keyg] at (180:1.2) {$\bm{e_6}$};
     \node [scale=2.5, fill=none, draw=none, keyg] at (135:1.2) {$\bm{e_7}$}; 
        \node  [scale=2.5, fill=none, draw=none, anchor=south west] at (n0.center) {$\bm{v_0}$};
     \node [scale=2.5, fill=none, draw=none, anchor=west] at (n1.center) {$\bm{v_1}$};
     \node [scale=2.5, fill=none, draw=none,  anchor=north west] at (n2.center) {$\bm{v_2}$};
     \node [scale=2.5, fill=none, draw=none,  anchor=north] at (n3.center) {$\bm{v_3}$};
     \node [scale=2.5, fill=none, draw=none,  anchor=north east] at (n4.center) {$\bm{v_4}$};
     \node [scale=2.5, fill=none, draw=none,  anchor=east] at (n5.center) {$\bm{v_5}$};
     \node [scale=2.5, fill=none, draw=none,  anchor=south east] at (n6.center) {$\bm{v_6}$};
     \node [scale=2.5, fill=none, draw=none,  anchor=south] at (n7.center) {$\bm{v_7}$};
     
 \end{tikzpicture}};
     \node  [scale=5, fill=none, draw=none] at (0,1) {$\bm{K_8$}};
     
     \node (k87) [scale=6] at (-4.75,-1) {\begin{tikzpicture}
   [scale=0.75, every node/.style={circle, draw=none, fill=black, scale=0.2}]
      \node [draw, fill=red!60] (n0) at (90:2)  {};
  \node (n1) at (45:2) {};
  \node (n2) at (0:2) {};
  \node (n3) at (-45:2) {};
   \node (n4) at (-90:2) {};
  \node (n5) at (-135:2)  {};
   \node (n6) at (180:2) {};
  \node (n7) at (135:2)  {};
  \foreach \x/\y in {n0/n2,n6/n0,n7/n0,n0/n1}
           \draw [keyr!50,opacity=0.5] (\x)--(\y);    
           
   \draw [black, thick] (n1)--(n2)--(n3)--(n4)--(n5)--(n6)--(n7)--(n1);
   
         \foreach \x/\y in {n1/n3,n2/n4,n3/n5,n4/n6,n5/n7,n6/n1,n7/n2}
           \draw [keyg,thick] (\x)--(\y);    
                      
    \node  [scale=2.5, fill=none, draw=none, anchor=south west] at (n0.center) {$\bm{v_0}$};
     \node [scale=2.5, fill=none, draw=none, anchor=west] at (n1.center) {$\bm{v_1}$};
     \node [scale=2.5, fill=none, draw=none,  anchor=north west] at (n2.center) {$\bm{v_2}$};
     \node [scale=2.5, fill=none, draw=none,  anchor=north] at (n3.center) {$\bm{v_3}$};
     \node [scale=2.5, fill=none, draw=none,  anchor=north east] at (n4.center) {$\bm{v_4}$};
     \node [scale=2.5, fill=none, draw=none,  anchor=east] at (n5.center) {$\bm{v_5}$};
     \node [scale=2.5, fill=none, draw=none,  anchor=south east] at (n6.center) {$\bm{v_6}$};
     \node [scale=2.5, fill=none, draw=none,  anchor=south] at (n7.center) {$\bm{v_7}$};

    \end{tikzpicture}};
        \node  [scale=5, fill=none, draw=none] at (-4.75,-1) {$\bm{K_7$}};
       
        	\node (k881) [scale=6] at (4.75,-1) {\begin{tikzpicture}
   [scale=0.75, every node/.style={circle, draw=none, fill=black, scale=0.2}]
      \node (n0) at (90:2)  {};
  \node [fill=keyb, scale=2] (n1) at (45:2) {};
  \node (n2) at (0:2) {};
  \node (n3) at (-45:2) {};
   \node (n4) at (-90:2) {};
  \node (n5) at (-135:2)  {};
   \node (n6) at (180:2) {};
  \node (n7) at (135:2)  {};
   \draw [black, thick] (n0)--(n1)--(n2)--(n3)--(n4)--(n5)--(n6)--(n7)--(n0)--(n1);
   
    \foreach \x/\y in {n1/n7}
           \draw [keyr!80, thick, dashed] (\x)--(\y);    

     \foreach \x/\y in {n1/n3,n2/n4,n3/n5,n4/n6,n5/n7,n6/n0,n0/n2}
           \draw [keyg,thick] (\x)--(\y);    
                        
    \node  [scale=2.5, fill=none, draw=none, anchor=south west] at (n0.center) {$\bm{v_0}$};
     \node [scale=2.5, fill=none, draw=none, anchor=west] at (n1.center) {$\bm{v_1}$};
     \node [scale=2.5, fill=none, draw=none,  anchor=north west] at (n2.center) {$\bm{v_2}$};
     \node [scale=2.5, fill=none, draw=none,  anchor=north] at (n3.center) {$\bm{v_3}$};
     \node [scale=2.5, fill=none, draw=none,  anchor=north east] at (n4.center) {$\bm{v_4}$};
     \node [scale=2.5, fill=none, draw=none,  anchor=east] at (n5.center) {$\bm{v_5}$};
     \node [scale=2.5, fill=none, draw=none,  anchor=south east] at (n6.center) {$\bm{v_6}$};
     \node [scale=2.5, fill=none, draw=none,  anchor=south] at (n7.center) {$\bm{v_7}$};

    \end{tikzpicture}};
        \node  [scale=5, fill=none, draw=none] at (4.75,-1) {$\bm{K_{8,-1}$}};

\draw [ultra thick, ->] ($(k8.center)!0.425!(k87.center)$)--($(k8.center)!0.625!(k87.center)$);

\draw [ultra thick, ->] ($(k8.center)!0.425!(k881.center)$)--($(k8.center)!0.625!(k881.center)$);

    \end{tikzpicture}
 \caption{Reduction for the complete $K_n$, working on $v_0$/$e_0$}
 \label{fig:kn0red}
\end{figure}

\begin{proposition}
Let $G=(V,E)$ be a convex geometric graph and $T(G)$ the number of its triangulations. Then, for any span-2 edge $e_i$ we have that T($G \,|\, e_i \text{ included}$) = T($G \setminus v_i$). 

Obviously, the following holds: $T(G) = T(G \setminus v_i) + T(G \setminus e_i)$.

\label{pro:selfred}
\end{proposition}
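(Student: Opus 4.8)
The plan is to establish the first (genuinely combinatorial) equality by an explicit bijection, and then read off the displayed identity as an immediate consequence of partitioning the triangulations of $G$ according to whether or not they use $e_i$.

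First I would fix the span-2 edge $e_i = v_{i-1}v_{i+1}$ and recall that the two polygon sides $v_{i-1}v_i$ and $v_iv_{i+1}$ are boundary edges, hence present in every triangulation. The core geometric observation is that in the convex drawing the chord $e_i$ together with these two sides bounds the triangle $v_{i-1}v_iv_{i+1}$, and that by the proper labeling no other vertex lies in this region. Consequently, in any triangulation $T$ of $G$ that contains $e_i$, the face on the $v_i$-side of $e_i$ can only be that single triangle, i.e.\ $v_i$ is an \emph{ear} cut off by $e_i$. I would make this precise by noting that any triangle of $T$ incident to $v_i$ must use both boundary edges at $v_i$: once $e_i$ is drawn, any diagonal $v_iv_j$ with $j \neq i-1,i,i+1$ would have to cross $e_i$, since $v_i$ lies on the opposite side of the chord from every other vertex. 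This forces the triangle $v_{i-1}v_iv_{i+1}$.

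Next I would define the map sending a triangulation $T$ of $G$ containing $e_i$ to the collection of its triangles other than the ear, and argue that this is a triangulation of $G \setminus v_i$: deleting the ear leaves the polygon on the vertices $V \setminus \{v_i\}$ fully triangulated, with $e_i$ now serving as the boundary side $v_{i-1}v_{i+1}$. Conversely, any triangulation of $G \setminus v_i$ re-attached to the ear $v_{i-1}v_iv_{i+1}$ yields a triangulation of $G$ containing $e_i$. Since these two operations are mutually inverse, they give a bijection, and therefore $T(G \mid e_i \text{ included}) = T(G \setminus v_i)$.

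Finally, for the displayed identity I would simply observe that every triangulation of $G$ either contains $e_i$ or does not, and that these two classes are disjoint and exhaustive. The first has cardinality $T(G \setminus v_i)$ by the bijection just established; the second is exactly the set of triangulations of $G$ avoiding $e_i$, which by definition is $T(G \setminus e_i)$, the same graph with $e_i$ deleted from the available edge set. Adding the two cardinalities yields $T(G) = T(G \setminus v_i) + T(G \setminus e_i)$. I expect the only real obstacle to be the geometric claim that the presence of $e_i$ forces the ear triangle; everything else is bookkeeping, so I would be careful there, invoking convexity and the proper labeling to rule out any vertex interfering within the region bounded by $e_i$ and the two sides at $v_i$.
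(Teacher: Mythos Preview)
Your proposal is correct and follows essentially the same approach as the paper: the key observation in both is that any diagonal incident to $v_i$ other than the two polygon sides crosses $e_i$, so including $e_i$ forces the ear (``hat'') triangle $v_{i-1}v_iv_{i+1}$, whence removing/reattaching that ear gives the bijection with triangulations of $G\setminus v_i$. Your write-up is simply more explicit about the bijection and the partition than the paper's short paragraph, but the argument is the same.
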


\begin{proof}
Observe that if $e_i$ is selected to participate in a triangulation $T_C$, then all edges incident to $v_i$ may not participate in $T_C$, as they all cross $e_i$. Therefore, the number of triangulations where $e_i$ is included is exactly determined by the number of triangulations of $G\setminus v_i$, as adding the ``hat'' $v_{i-1}v_iv_{i+1}$ in any of the latest will yield a distinct triangulation of the original graph. \end{proof}

Figure \ref{fig:kn0red} demonstrates the above, nevertheless for the special case of an initial complete graph. However, we easily get that $T(K_{n,0})=T(K_{n-1,0})+T(K_{n,-1})$, which is essentially the first of the BC relations. 

Now, consider a $K_{n,-1}$ graph. If we work on $v_1$/$e_1$ (Figure \ref{fig:kn1red}, marked blue vertex), thus the selection or not of $e_1$ for the next branching of our tree, we obtain either a $K_7\equiv K_{7,0}$ (left child), either a $K_{8,-2}$. In general (Figure \ref{fig:knmred}), working on the \emph{next available} span-2 edge, we reduce $K_{n,-m}$ is to either a $K_{n-1,-m+1}$ (left child) or a $K_{n,-m-1}$ (right child).

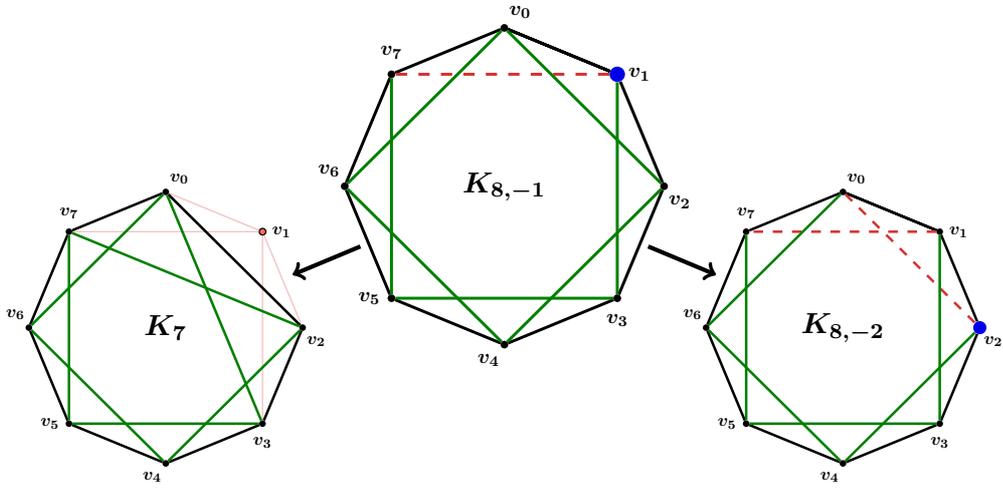
\begin{figure}[h]
\centering
\begin{tikzpicture}
   [scale=0.9375, every node/.style={rectangle, draw=none, fill=none, scale=0.2}]
  
	\node (k8) [scale=7] at (0,1) {\begin{tikzpicture}
   [scale=0.75, every node/.style={circle, draw=none, fill=black, scale=0.2}]
      \node (n0) at (90:2)  {};
  \node [fill=keyb, scale=2] (n1) at (45:2) {};
  \node (n2) at (0:2) {};
  \node (n3) at (-45:2) {};
   \node (n4) at (-90:2) {};
  \node (n5) at (-135:2)  {};
   \node (n6) at (180:2) {};
  \node (n7) at (135:2)  {};
   \draw [black, thick] (n0)--(n1)--(n2)--(n3)--(n4)--(n5)--(n6)--(n7)--(n0)--(n1);
   
    \foreach \x/\y in {n1/n7}
           \draw [keyr!80, thick, dashed] (\x)--(\y);    

     \foreach \x/\y in {n1/n3,n2/n4,n3/n5,n4/n6,n5/n7,n6/n0,n0/n2}
           \draw [keyg,thick] (\x)--(\y);    
                        
    \node  [scale=2.5, fill=none, draw=none, anchor=south west] at (n0.center) {$\bm{v_0}$};
     \node [scale=2.5, fill=none, draw=none, anchor=west] at (n1.center) {$\bm{v_1}$};
     \node [scale=2.5, fill=none, draw=none,  anchor=north west] at (n2.center) {$\bm{v_2}$};
     \node [scale=2.5, fill=none, draw=none,  anchor=north] at (n3.center) {$\bm{v_3}$};
     \node [scale=2.5, fill=none, draw=none,  anchor=north east] at (n4.center) {$\bm{v_4}$};
     \node [scale=2.5, fill=none, draw=none,  anchor=east] at (n5.center) {$\bm{v_5}$};
     \node [scale=2.5, fill=none, draw=none,  anchor=south east] at (n6.center) {$\bm{v_6}$};
     \node [scale=2.5, fill=none, draw=none,  anchor=south] at (n7.center) {$\bm{v_7}$};

    \end{tikzpicture}};
     \node  [scale=5, fill=none, draw=none] at (0,1) {$\bm{K_{8,-1}$}};
     
     \node (k87) [scale=6] at (-4.75,-1) {\begin{tikzpicture}
   [scale=0.75, every node/.style={circle, draw=none, fill=black, scale=0.2}]
      \node  (n0) at (90:2)  {};
  \node [draw, fill=red!60] (n1) at (45:2) {};
  \node (n2) at (0:2) {};
  \node (n3) at (-45:2) {};
   \node (n4) at (-90:2) {};
  \node (n5) at (-135:2)  {};
   \node (n6) at (180:2) {};
  \node (n7) at (135:2)  {};

    \foreach \x/\y in {n1/n7,n1/n2,n1/n0,n1/n3}
           \draw [keyr!50, opacity=0.5] (\x)--(\y);    
           
   \draw [black, thick] (n0)--(n2)--(n3)--(n4)--(n5)--(n6)--(n7)--(n0);

     \foreach \x/\y in {n2/n4,n3/n5,n4/n6,n5/n7,n6/n0,n0/n3,n2/n7}
           \draw [keyg,thick] (\x)--(\y);    
                        
    \node  [scale=2.5, fill=none, draw=none, anchor=south west] at (n0.center) {$\bm{v_0}$};
     \node [scale=2.5, fill=none, draw=none, anchor=west] at (n1.center) {$\bm{v_1}$};
     \node [scale=2.5, fill=none, draw=none,  anchor=north west] at (n2.center) {$\bm{v_2}$};
     \node [scale=2.5, fill=none, draw=none,  anchor=north] at (n3.center) {$\bm{v_3}$};
     \node [scale=2.5, fill=none, draw=none,  anchor=north east] at (n4.center) {$\bm{v_4}$};
     \node [scale=2.5, fill=none, draw=none,  anchor=east] at (n5.center) {$\bm{v_5}$};
     \node [scale=2.5, fill=none, draw=none,  anchor=south east] at (n6.center) {$\bm{v_6}$};
     \node [scale=2.5, fill=none, draw=none,  anchor=south] at (n7.center) {$\bm{v_7}$};

    \end{tikzpicture}};
        \node  [scale=5, fill=none, draw=none] at (-4.75,-1) {$\bm{K_7}$};
       
        	\node (k881) [scale=6] at (4.75,-1) {\begin{tikzpicture}
   [scale=0.75, every node/.style={circle, draw=none, fill=black, scale=0.2}]
      \node (n0) at (90:2)  {};
  \node (n1) at (45:2) {};
  \node [fill=keyb, scale=2] (n2) at (0:2) {};
  \node (n3) at (-45:2) {};
   \node (n4) at (-90:2) {};
  \node (n5) at (-135:2)  {};
   \node (n6) at (180:2) {};
  \node (n7) at (135:2)  {};
   \draw [black, thick] (n0)--(n1)--(n2)--(n3)--(n4)--(n5)--(n6)--(n7)--(n0)--(n1);
   
    \foreach \x/\y in {n1/n7,n0/n2}
           \draw [keyr!80, thick, dashed] (\x)--(\y);    

     \foreach \x/\y in {n1/n3,n2/n4,n3/n5,n4/n6,n5/n7,n6/n0}
           \draw [keyg,thick] (\x)--(\y);    
                        
    \node  [scale=2.5, fill=none, draw=none, anchor=south west] at (n0.center) {$\bm{v_0}$};
     \node [scale=2.5, fill=none, draw=none, anchor=west] at (n1.center) {$\bm{v_1}$};
     \node [scale=2.5, fill=none, draw=none,  anchor=north west] at (n2.center) {$\bm{v_2}$};
     \node [scale=2.5, fill=none, draw=none,  anchor=north] at (n3.center) {$\bm{v_3}$};
     \node [scale=2.5, fill=none, draw=none,  anchor=north east] at (n4.center) {$\bm{v_4}$};
     \node [scale=2.5, fill=none, draw=none,  anchor=east] at (n5.center) {$\bm{v_5}$};
     \node [scale=2.5, fill=none, draw=none,  anchor=south east] at (n6.center) {$\bm{v_6}$};
     \node [scale=2.5, fill=none, draw=none,  anchor=south] at (n7.center) {$\bm{v_7}$};

    \end{tikzpicture}};
        \node  [scale=5, fill=none, draw=none] at (4.75,-1) {$\bm{K_{8,-2}$}};

\draw [ultra thick, ->] ($(k8.center)!0.425!(k87.center)$)--($(k8.center)!0.625!(k87.center)$);

\draw [ultra thick, ->] ($(k8.center)!0.425!(k881.center)$)--($(k8.center)!0.625!(k881.center)$);

    \end{tikzpicture}
 \caption{Reduction for $K_{n,-1}$, working on the \emph{next} vertex/span-2 edge}
 \label{fig:kn1red}
\end{figure}

\begin{figure}[h]
\centering
\begin{tikzpicture}
   [scale=0.9375, every node/.style={rectangle, draw=none, fill=none, scale=0.2}]
  
	\node (k8) [scale=7] at (0,1) {\begin{tikzpicture}
   [scale=0.75, every node/.style={circle, draw=none, fill=black, scale=0.2}]
      \node (n0) at (90:2)  {};
  \node (n1) at (45:2) {};
  \node  (n2) at (0:2) {};
  \node [fill=keyb, scale=2] (n3) at (-45:2) {};
   \node (n4) at (-90:2) {};
  \node (n5) at (-135:2)  {};
   \node (n6) at (180:2) {};
  \node (n7) at (135:2)  {};
   \draw [black, thick] (n0)--(n1)--(n2)--(n3)--(n4)--(n5)--(n6)--(n7)--(n0)--(n1);
   
    \foreach \x/\y in {n1/n3, n1/n7,n0/n2}
           \draw [keyr!80, thick, dashed] (\x)--(\y);    

     \foreach \x/\y in {n2/n4,n3/n5,n4/n6,n5/n7,n6/n0}
           \draw [keyg,thick] (\x)--(\y);    
                        
    \node  [scale=2.5, fill=none, draw=none, anchor=south west] at (n0.center) {$\bm{v_0}$};
     \node [scale=2.5, fill=none, draw=none, anchor=west] at (n1.center) {$\bm{v_1}$};
     \node [scale=2.5, fill=none, draw=none,  anchor=north west] at (n2.center) {$\bm{v_2}$};
     \node [scale=2.5, fill=none, draw=none,  anchor=north] at (n3.center) {$\bm{v_3}$};
     \node [scale=2.5, fill=none, draw=none,  anchor=north east] at (n4.center) {$\bm{v_4}$};
     \node [scale=2.5, fill=none, draw=none,  anchor=east] at (n5.center) {$\bm{v_5}$};
     \node [scale=2.5, fill=none, draw=none,  anchor=south east] at (n6.center) {$\bm{v_6}$};
     \node [scale=2.5, fill=none, draw=none,  anchor=south] at (n7.center) {$\bm{v_7}$};

    \end{tikzpicture}};
     \node  [scale=5, fill=none, draw=none] at (0,1) {$\bm{K_{8,-3}$}};
     
     \node (k87) [scale=6] at (-4.75,-1) {\begin{tikzpicture}
   [scale=0.75, every node/.style={circle, draw=none, fill=black, scale=0.2}]
      \node (n0) at (90:2)  {};
  \node (n1) at (45:2) {};
  \node [fill=keyb, scale=2]   (n2) at (0:2) {};
  \node [draw, fill=red!60]  (n3) at (-45:2) {};
   \node  (n4) at (-90:2) {};
  \node (n5) at (-135:2)  {};
   \node (n6) at (180:2) {};
  \node (n7) at (135:2)  {};
   
    \foreach \x/\y in {n1/n3,n3/n4,n3/n2,n3/n5}
           \draw [keyr!50, opacity=0.5] (\x)--(\y);    
   
   \draw [black, thick] (n0)--(n1)--(n2)--(n4)--(n5)--(n6)--(n7)--(n0)--(n1);
   
    \foreach \x/\y in {n0/n2,n1/n7}
           \draw [keyr!80, thick, dashed] (\x)--(\y);    

     \foreach \x/\y in {n1/n4,n4/n6,n5/n7,n6/n0,n2/n5}
           \draw [keyg,thick] (\x)--(\y);    
                        
    \node  [scale=2.5, rectangle, fill=none, draw=none, anchor=south west] at (n0.center) {$\bm{v_0 \equiv r_0}$};
     \node [scale=2.5, rectangle, fill=none, draw=none, anchor=west] at (n1.center) {$\bm{v_1 \equiv r_1}$};
     \node [scale=2.5, rectangle, fill=none, draw=none,  anchor=north west] at (n2.center) {$\bm{v_2 \equiv r_2}$};
     \node [scale=2.5, rectangle, fill=none, draw=none,  anchor=north west ] at (n3.center) {$\bm{v_3}$};
     \node [scale=2.5, rectangle, fill=none, draw=none,  anchor=north east] at (n4.center) {$\bm{v_4 \equiv r_3}$};
     \node [scale=2.5, rectangle, fill=none, draw=none,  anchor=east] at (n5.center) {$\bm{v_5 \equiv r_4}$};
     \node [scale=2.5, rectangle, fill=none, draw=none,  anchor=south east] at (n6.center) {$\bm{v_6 \equiv r_5}$};
     \node [scale=2.5, rectangle, fill=none, draw=none,  anchor=south east ] at (n7.center) {$\bm{v_7 \equiv r_6}$};

    \end{tikzpicture}};
        \node  [scale=5, fill=none, draw=none] at (-4.75,-1) {$\bm{K_{7,-2}}$};
       
        	\node (k881) [scale=6] at (4.75,-1) {\begin{tikzpicture}
   [scale=0.75, every node/.style={circle, draw=none, fill=black, scale=0.2}]
      \node (n0) at (90:2)  {};
  \node (n1) at (45:2) {};
  \node (n2) at (0:2) {};
  \node (n3) at (-45:2) {};
   \node [fill=keyb, scale=2] (n4) at (-90:2) {};
  \node  (n5) at (-135:2)  {};
   \node (n6) at (180:2) {};
  \node (n7) at (135:2)  {};
   \draw [black, thick] (n0)--(n1)--(n2)--(n3)--(n4)--(n5)--(n6)--(n7)--(n0)--(n1);
   
    \foreach \x/\y in {n2/n4,n1/n3, n1/n7,n0/n2}
           \draw [keyr!80, thick, dashed] (\x)--(\y);    

     \foreach \x/\y in {n3/n5,n4/n6,n5/n7,n6/n0}
           \draw [keyg,thick] (\x)--(\y);    
                        
    \node  [scale=2.5, fill=none, draw=none, anchor=south west] at (n0.center) {$\bm{v_0}$};
     \node [scale=2.5, fill=none, draw=none, anchor=west] at (n1.center) {$\bm{v_1}$};
     \node [scale=2.5, fill=none, draw=none,  anchor=north west] at (n2.center) {$\bm{v_2}$};
     \node [scale=2.5, fill=none, draw=none,  anchor=north] at (n3.center) {$\bm{v_3}$};
     \node [scale=2.5, fill=none, draw=none,  anchor=north east] at (n4.center) {$\bm{v_4}$};
     \node [scale=2.5, fill=none, draw=none,  anchor=east] at (n5.center) {$\bm{v_5}$};
     \node [scale=2.5, fill=none, draw=none,  anchor=south east] at (n6.center) {$\bm{v_6}$};
     \node [scale=2.5, fill=none, draw=none,  anchor=south] at (n7.center) {$\bm{v_7}$};

    \end{tikzpicture}};
        \node  [scale=5, fill=none, draw=none] at (4.75,-1) {$\bm{K_{8,-4}$}};

\draw [ultra thick, ->] ($(k8.center)!0.425!(k87.center)$)--($(k8.center)!0.625!(k87.center)$);

\draw [ultra thick, ->] ($(k8.center)!0.425!(k881.center)$)--($(k8.center)!0.625!(k881.center)$);

    \end{tikzpicture}
 \caption{Reduction for $K_{n,-m}$, working on the \emph{next} vertex/span-2 edge}
 \label{fig:knmred}
\end{figure}
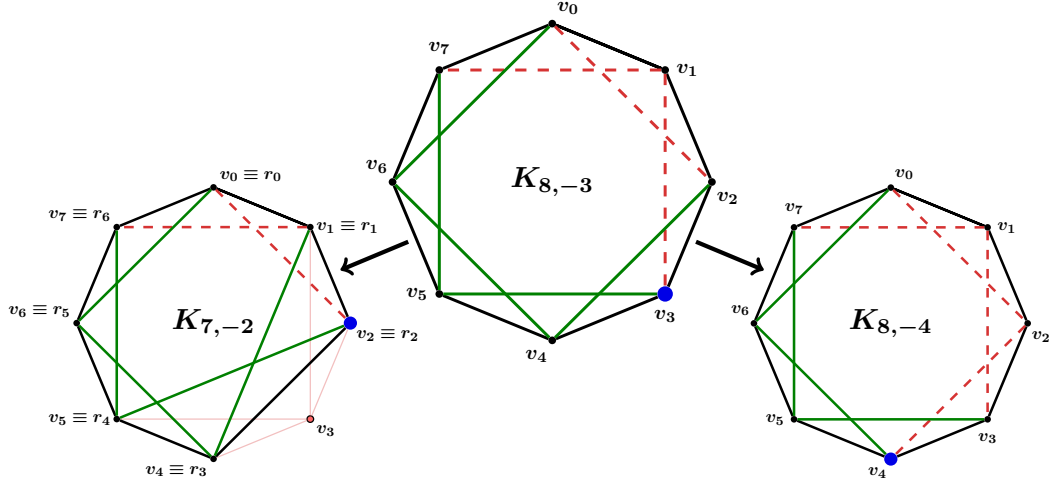

It proves that the key to our reduction is the correct definition of the \emph{next vertex (or span-2 edge)} to work on. If, for instance, while on $K_{8,-1}$ we select to work on $v_5$/$e_5$, the nice to our reduction $K_7$ does not pop out. More importantly, we would miss the recurring of \emph{isomorphic graphs}, which give us the number of triangulations they feature per class ($K_{n,-m}$), and not per specific instance.

\begin{definition}[Next vertex/span-2 edge]
Given a properly labeled convex geometric graph on $n$ vertices, missing only $m$ consecutive span-2 edges, $e_i,e_{i+1},...,e_{i+m}$. The \emph{next} vertex/span-2 edge is the pair $v_{m+1}$/$e_{m+1}$.
\label{def:next}
\end{definition}  
 
 Now, let us formally prove our claim.
 
\begin{theorem}
For any $K_{n,-m}$ and $-n+1 \leq -m \leq -1$, it is $$T(K_{n,-m})=T(K_{n-1,-m+1})+T(K_{n,-m-1})$$
Also, it is $$T(K_{n,-m}) = a_{n-2,m}$$ for all defined quantities.
\end{theorem}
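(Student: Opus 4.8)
The plan is to prove the recurrence first and then read off the closed form by matching it against the BC recursion. For the recurrence I would apply Proposition~\ref{pro:selfred} to the \emph{next} span-2 edge $e$ of $K_{n,-m}$ (Definition~\ref{def:next}), whose apex is the next vertex $v$: the proposition immediately yields $T(K_{n,-m}) = T(K_{n,-m}\setminus v) + T(K_{n,-m}\setminus e)$, where $\setminus v$ records the triangulations using $e$ and $\setminus e$ those avoiding it. The entire content of this half is then to recognize the two children inside the $K_{\cdot,-\cdot}$ family, i.e.\ to establish the isomorphisms $K_{n,-m}\setminus v \cong K_{n-1,-m+1}$ and $K_{n,-m}\setminus e \cong K_{n,-m-1}$.

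The second isomorphism is essentially definitional: forbidding $e$ is the same as triangulating the graph that also lacks $e$, and since $e$ sits immediately after the block of $m$ already-missing consecutive span-2 edges, that graph misses $m+1$ consecutive span-2 edges on the same $n$ vertices, hence is $K_{n,-m-1}$. The first isomorphism is where the real care lies and is, I expect, the main obstacle. Deleting $v$ produces an $(n-1)$-gon in which the two polygon sides at $v$ fuse into the single new side $e$; I would then argue that, of the $m$ missing span-2 edges, exactly the one having $v$ as an endpoint ceases to exist, while the other $m-1$ still skip a single vertex of the smaller polygon and remain a consecutive missing block, with no fresh span-2 edge removed near the fusion point---so the result is $K_{n-1,-m+1}$. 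It is precisely this bookkeeping that forces the choice of the \emph{next} edge: working on any other span-2 edge would tear the missing block apart and the children would leave the $K_{\cdot,-\cdot}$ family, as the $v_5/e_5$ remark preceding Definition~\ref{def:next} warns. Together with Proposition~\ref{pro:selfred}, the two isomorphisms give the stated recurrence for $1 \le m \le n-1$.

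For the closed form I would introduce the reindexing $b_{p,q}:=T(K_{p+2,-q})$, under which the target becomes $b_{p,q}=a_{p,q}$. The recurrence just proved then reads $b_{p,q}=b_{p-1,q-1}+b_{p,q+1}$, which is exactly relation~\ref{rel:bc2}; the $m=0$ instance $T(K_{n,0})=T(K_{n-1,0})+T(K_{n,-1})$ derived earlier from Figure~\ref{fig:kn0red} becomes $b_{p,0}=b_{p-1,0}+b_{p,1}$, which is~\ref{rel:bc1}. The boundary and base data of~\ref{rel:bc3} match as well: Proposition~\ref{pro:nminus2} gives $T(K_{n,-(n-2)})=0$, i.e.\ $b_{p,p}=0$, and the lone empty triangulation of $K_2$ gives $b_{0,0}=1$. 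Since relations~\ref{rel:bc1}--\ref{rel:bc3} fill the table uniquely---each row $n$ being forced from row $n-1$ by running~\ref{rel:bc2} downward from $a_{n,n}=0$ and closing with~\ref{rel:bc1}, precisely the order of Figure~\ref{fig:table}---the arrays $b_{p,q}$ and $a_{p,q}$ coincide. Undoing the reindexing yields $T(K_{n,-m})=a_{n-2,m}$ for all defined quantities, and in particular $T(K_{n,0})=a_{n-2,0}=C_{n-2}$ by~\eqref{eq:bc}, recovering Euler's Catalan count.
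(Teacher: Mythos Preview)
Your proposal is correct and follows essentially the same approach as the paper: apply Proposition~\ref{pro:selfred} at the next span-2 edge, identify the two children as $K_{n-1,-m+1}$ and $K_{n,-m-1}$, and then match the resulting recursion plus boundary data against the BC relations~\ref{rel:bc1}--\ref{rel:bc3} to conclude $T(K_{n,-m})=a_{n-2,m}$. The only points where you are slightly looser than the paper are boundary technicalities: Proposition~\ref{pro:nminus2} is stated for $n\ge 5$, so $b_{2,2}=T(K_{4,-2})=0$ needs a one-line separate check, and the paper treats $K_{2,0}$ as a convention rather than invoking an ``empty triangulation of $K_2$''; neither affects the substance of the argument.
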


\begin{proof}
Without loss of generality (see \ref{rul:relabel}), we may relabel our initial graph --if necessary-- and have the edges $e_0,...,e_{m-1}$ as the missing span-2 ones. The next vertex/span-2 edge is the pair $v_m$/$e_m$. The right subtree is rooted to a $K_{n,-m-1}$ graph, as it derives from deleting $e_m$, having a consecutive $m+1$ span-2 (but no other) edges missing. 

For the left subtree, deleting $v_m$ leaves us to examine the edges of a graph on $n-1$ vertices. We have that edges $v_{n-1}v_1,v_0v_2,...,v_{m-3}v_{m-1}$ are all missing from the induced subgraph, and together they add up to $m-1$ consecutive span-2 edges. Edge $v_{m-2}v_m$, missing from $K_{n,-m}$ is not a potential edge of the smaller graph. Edge $v_{m-1}v_{m+1}$ has now become a side of the $(n-1)$-gon, and $v_{m-2}v_{m+1}$ and $v_{m-1}v_{m+2}$ are now span-2 edges of the new graph, as are all the (potentially) remaining $v_{m+1}v_{m+3},...,v_{n-2}v_0$. Thus, we have obtained a $K_{n-1,-m+1}$ graph. To complete the proof, we remind that:
\begin{itemize}
\item Proposition \ref{pro:nminus2} gives us $T(K_{n,-n+2})=0$ for pentagons and up; but $T(K_{4,-2})$ gives the number of triangulations of a quadrilateral missing all 2 span-2 edges, therefore the relation is satisfied for all $n\geq 4$.  
\item $T(K_{3,0})=1$, as $K_3$ is a triangle.
\item $K_{3,-1}$ and $K_{2,0}$ are not properly defined as convex graphs, and not actually needed, but for sake of completeness, we shall consider their number of triangulations equal to $a_{1,1}$ and $a_{0,0}$ respectively.
\end{itemize}
\end{proof}

\begin{theorem}
Our main sampling algorithm scheme adopted for convex triangulations yields an algorithm of running time $\mathcal{O}(n^2)$.
\end{theorem}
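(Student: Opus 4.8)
The plan is to charge the running time to the two phases of the framework separately---the preprocessing that fills the BC table and the main tree walk that extracts one sample---and to bound each by $\mathcal{O}(n^2)$, so that their sum is $\mathcal{O}(n^2)$. First I would fix the entry point: triangulating the convex $n$-gon amounts to sampling a leaf of the tree rooted at $T(K_{n,0})$, and by the preceding theorem $T(K_{n,0})=a_{n-2,0}=C_{n-2}$, so it suffices to compute the BC table up to row $n-2$. As already observed for Figure~\ref{fig:table}, this table has $(n-3)(n-1)+1=\mathcal{O}(n^2)$ entries, and each entry is produced from at most two previously computed neighbours by a single application of the BC recursion. This gives the preprocessing bound of $\mathcal{O}(n^2)$.

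For the main walk I would first bound the number of branchings, i.e.\ the tree height. A branching from a node labelled $a_{i,j}$ moves either left, to $a_{i-1,j-1}$ (or to $a_{i-1,0}$ when $j=0$), or right, to $a_{i,j+1}$: a left move always lowers the first index by one while a right move leaves it fixed. Since the root sits at $a_{n-2,0}$ and every leaf is reached at $a_{0,0}$, exactly $n-2$ left moves occur on any root-to-leaf path. Tracking the second index, each right move raises $j$ by one and each left move taken from $j\geq 1$ lowers it by one; because $j$ both starts and ends at $0$, the number of right moves equals the number of left moves made from $j\geq 1$, hence is at most $n-2$. Every path therefore has length at most $2(n-2)=\mathcal{O}(n)$, in agreement with the earlier remark that leaf depth fluctuates between $n$ and $2n$.

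Next I would cost a single branching. After drawing one random integer $r$ uniformly in $[0,C_{n-2}-1]$---which needs $\log C_{n-2}=\mathcal{O}(n)$ random bits by (\ref{asympt}) and is obtained in $\mathcal{O}(n)$ time, using rejection to accommodate the non-power-of-two range at $\mathcal{O}(1)$ expected retries---no further randomness is spent: at each node we read the left subtree size from the table in $\mathcal{O}(1)$, compare it against the current remainder, and either recurse left or subtract it and recurse right. The only nontrivial cost is that these quantities are ballot and Catalan numbers of $\Theta(n)$ bits, so each comparison-and-subtraction is $\mathcal{O}(n)$; multiplying by the $\mathcal{O}(n)$ branchings bounds the walk by $\mathcal{O}(n^2)$, while assembling the $n-3$ diagonals of the output triangulation along the way adds only $\mathcal{O}(n)$. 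Summing the two phases yields the claimed $\mathcal{O}(n^2)$.

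I expect the only genuinely delicate point to be the per-branching arithmetic. It is tempting to declare each branching $\mathcal{O}(1)$ and conclude an $\mathcal{O}(n)$ walk, but the subtree sizes carry $\Theta(n)$ bits, so each comparison is honestly $\mathcal{O}(n)$ and the walk is $\mathcal{O}(n^2)$. This is harmless for the present statement, since the $\mathcal{O}(n^2)$ preprocessing already dominates, and it is precisely why---unlike the mountain-ranges instance treated later---the walk for convex triangulations does not by itself collapse to linear time.
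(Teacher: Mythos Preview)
Your overall bound is correct, and the preprocessing argument matches the paper, but the reason you give for the walk costing $\mathcal{O}(n^2)$ is not the paper's and is internally inconsistent. You count table construction as $\mathcal{O}(n^2)$ by treating each addition as unit cost, then switch to bit-level accounting to charge each branching $\mathcal{O}(n)$ for comparing $\Theta(n)$-bit numbers. Under a single cost model either the preprocessing becomes $\mathcal{O}(n^3)$ bit operations, or the walk becomes $\mathcal{O}(n)$ arithmetic steps; you cannot have both as stated.

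The paper pins the quadratic walk on something else entirely: when branching \emph{left} at a node $K_{r,-m}$ one deletes the vertex $v_m$, and to keep the reduction well-defined one must relabel the remaining $r-1$ vertices and maintain a $1\times n$ table mapping current labels back to the original ones (Rule~\ref{rul:relabel} and the node-encoding bullet in the appendix). That bookkeeping is $\mathcal{O}(n)$ per node over $\mathcal{O}(n)$ nodes, which is exactly what the paper means by ``the quadratic (and not linear) time is due to the information that must be stored when branching left.'' This is also the real distinction from mountain ranges: there the state at each node is just the pair $(n,m)$ plus one output symbol, so each step is constant work; for triangulations the state carries an $\mathcal{O}(n)$-size relabeling. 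Your closing sentence attributes the contrast to bit arithmetic, but the very same ballot numbers drive the mountain-range walk, so that cannot be what separates the two problems.
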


\begin{proof} As an outline, the quadratic (and not linear) time is due to the information that must be stored when branching left. For a complete proof, refer to the Appendix. Yet, we note that \cite{dev99ran} implies a linear algorithm by sampling monotone lattice paths, first. However, if one wants to preserve one-to-one relation between the subinstances of the two problems, we are not aware of any work proposing any similar framework to ours. 
\end{proof}

\section{Mountain ranges}

The mountain range problem consider the ways to form a ``mountain range'' with $n$ upstrokes and $n$ downstrokes that all stay above a horizontal line. As with the convex triangulations, we assume the space of all solutions, that is actually a $n \times n$ lattice rotated as to comprehend our node to node movement as Up or Down and always maintaining direction to the right (w.l.o.g.). 

Assume we are on the horizon, or level 0, so we \emph{have} to move Up. Our actual choice is whether to go further Up or slant Down. The latest case brings us to the horizon with 1 upstroke and downstroke to our left, therefore the remaining mountain ranges are exactly the mountain ranges with $n-1$ upstrokes and downstrokes - correlate to the left child of the BC-implied tree. That is:
$$r_{n,0}=r_{n-1,0}+r_{n,1}$$ where the remainder $r_{n,1}$ denotes the case we moved Up. Of course, the cases yield mutually disjoint mountain ranges.

It is easy enough to go on, explaining that if we are not on the horizon ($m\geq 1$), we are not dictated to move up (unless we cannot - the ``height'' or the range is at most $n$). If we move up, our $m$ increases by one. If we move down, we ``see'' again a disjoint set of potential solutions. Since we slanted down ($m$ decreased) we can symbolize this reduction by $$r_{n,m}=r_{n-1,m-1}+r_{n,m+1}$$.

\begin{theorem}
The main algorithm obtained by our scheme for the mountain ranges problem is linear.
\end{theorem}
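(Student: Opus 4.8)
The plan is to measure two things separately and multiply: the length of the root-to-leaf path that the sampling walk traverses, and the cost of a single step along it. Since the BC table is already available from the shared $\mathcal{O}(n^2)$ preprocessing and a single random value in $[0..C_n-1]$ has been drawn, the ``main algorithm'' is exactly this descent, so it suffices to show it performs $\mathcal{O}(n)$ steps of $\mathcal{O}(1)$ work each.

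First I would bound the path length. Every node carries a pair $(n,m)$ with the root at $r_{n,0}$, and the BC recursion dictates that a left branch sends $(n,m)\mapsto(n-1,m-1)$ when $m\ge 1$ and $(n,0)\mapsto(n-1,0)$ when $m=0$, while a right branch sends $(n,m)\mapsto(n,m+1)$. Thus each left branch decreases the first coordinate by exactly one, and a leaf (the base case $r_{0,0}$) is reached precisely when that coordinate hits $0$; hence every root-to-leaf path contains exactly $n$ left branches. Because the walk starts and ends at level $m=0$ and the boundary cell $a_{n,n}=0$ is never entered (so $m\le n-1$ throughout), the net change of $m$ is zero, which forces the number of right branches to equal the number of level-decreasing left branches, and that is at most $n$. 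Consequently the path has length between $n$ and $2n$, confirming the earlier remark about the fluctuation of leaf depth; in particular it is $\mathcal{O}(n)$.

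Second I would argue that each step is $\mathcal{O}(1)$, and this is the place where mountain ranges genuinely differ from convex triangulations. The solution here is produced sequentially as a stroke word: a right branch appends an Up step and a left branch appends the corresponding Down step(s) toward the horizon, so emitting the next $\mathcal{O}(1)$ symbols of the output is a constant-time operation, and the only state carried between steps is the pair of counters $(n,m)$ plus the output buffer. There is no geometric relabeling and no record of surviving edges to maintain -- exactly the $\mathcal{O}(n)$-per-step bookkeeping that produces the quadratic bound in the triangulation case. To decide the branch I would compare the residual sampled value against the left-subtree size ($r_{n-1,0}$ or $r_{n-1,m-1}$), which is a single lookup in the precomputed BC table, subtracting when the walk goes right; in the RAM model used throughout (and in \cite{dev99ran}) these operations on the sampled value are unit cost, so the decision is $\mathcal{O}(1)$ as well. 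Multiplying the $\mathcal{O}(n)$ path length by $\mathcal{O}(1)$ work per node gives the claimed linear bound.

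The hard part, I expect, is not the path-length count but pinning down the per-step cost rigorously: one must fix the computational model for arithmetic on the $\log C_n$-bit sampled value and then verify that, in contrast with triangulations, no left branch ever demands super-constant storage or an update beyond adjusting two counters and appending a bounded suffix to the output word. Once the model makes a branch comparison and a symbol emission constant-time, linearity of the main algorithm follows immediately from the length bound established above.
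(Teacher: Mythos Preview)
Your proposal is correct and follows essentially the same approach as the paper: bound the number of branchings by $\mathcal{O}(n)$ and argue that each step costs $\mathcal{O}(1)$ because, unlike the triangulation case, only the pair $(n,m)$ and an appended Up/Down symbol need to be maintained. The paper's own proof is a three-line sketch making exactly these points, so your version is in fact considerably more careful---your explicit derivation that the path length lies between $n$ and $2n$ and your contrast with the $\mathcal{O}(n)$-per-node relabeling of the triangulation algorithm spell out what the paper leaves implicit.
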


\begin{proof}
We need linear in $n$ coin tosses to get the code of the mountain range; we branch linear in $n$ times and for each node we lie on, we store a linear amount of information (Ups-Downs) to backtrack to the root.  
\end{proof}


\begin{thebibliography}{10}

\bibitem{alv15count}
Victor Alvarez, Karl Bringmann, Saurabh Ray, and Raimund Seidel.
 Counting triangulations and other crossing-free structures
  approximately.
 {\em Comput. Geom.}, 48(5):386--397, 2015.

\bibitem{alv13agg}
Victor Alvarez and Raimund Seidel.
 A simple aggregative algorithm for counting triangulations of planar
  point sets and related problems.
 In {\em Symposuim on Computational Geometry 2013, SoCG '13, Rio de
  Janeiro, Brazil, June 17-20, 2013}, pages 1--8, 2013.

\bibitem{amb97}
J.~Ambjorn, B.~Durhus, and T.~Jonsson.
 {\em Quantum Geometry: A Statistical Field Theory Approach}.
 Cambridge Monographs on Mathematical Physics, Cambridge University
  Press, Cambridge, 1997.

\bibitem{bak17}
Eleni Bakali, Aggeliki Chalki, Aris Pagourtzis, Petros Pantavos, and Stathis
  Zachos.
 Completeness results for counting problems with easy decision.
 In {\em Algorithms and Complexity - 10th International Conference,
  CIAC 2017, Athens, Greece, May 24-26, 2017, Proceedings}, pages 55--66, 2017.

\bibitem{bosemain}
Prosenjit Bose, Ferran Hurtado, Eduardo Rivera-Campo, and David~R. Wood.
 Partitions of complete geometric graphs into plane trees.
 {\em Comput. Geom.}, 34(2):116--125, 2006.

\bibitem{ber00}
Mark de~Berg, Otfried Cheong, Marc~J. van Kreveld, and Mark~H. Overmars.
 {\em Computational geometry: algorithms and applications, 3rd
  Edition}.
 Springer, 2008.

\bibitem{soh97enc}
Markus Denny and Christian Sohler.
 Encoding a triangulation as a permutation of its point set.
 In {\em Proceedings of the 9th Canadian Conference on Computational
  Geometry, Kingston, Ontario, Canada, August 11-14, 1997}, 1997.

\bibitem{dev99ran}
Luc Devroye, Philippe Flajolet, Ferran Hurtado, Marc Noy, and William~L.
  Steiger.
 Properties of random triangulations and trees.
 {\em Discrete {\&} Computational Geometry}, 22(1):105--117, 1999.

\bibitem{ding05ran}
Q.~Ding, J.~Qian, W.~Tsang, and C.~Wang.
 Randomly generating triangulations of a simple polygon.
 In {\em Computing and Combinatorics, 11th Annual International
  Conference, {COCOON} 2005, Kunming, China, August 16-29, 2005, Proceedings},
  pages 471--480, 2005.

\bibitem{eps94}
Peter Epstein, J.~Kavanagh, A.~Knight, J.~May, T.~Nguyen, and
  J{\"{o}}rg{-}R{\"{u}}diger Sack.
 A workbench for computational geometry.
 {\em Algorithmica}, 11(4):404--428, 1994.

\bibitem{eps94ran}
Peter Epstein and J{\"{o}}rg{-}R{\"{u}}diger Sack.
 Generating triangulations at random.
 {\em {ACM} Trans. Model. Comput. Simul.}, 4(3):267--278, 1994.

\bibitem{feller1968}
W.~Feller.
 {\em An introduction to probability theory and its applications}.
 Number v. 1. Wiley, 1968.

\bibitem{gui87a}
Leonidas~J. Guibas and John Hershberger.
 Optimal shortest path queries in a simple polygon.
 {\em J. Comput. Syst. Sci.}, 39(2):126--152, 1989.

\bibitem{gui87b}
Leonidas~J. Guibas, John Hershberger, Daniel Leven, Micha Sharir, and
  Robert~Endre Tarjan.
 Linear-time algorithms for visibility and shortest path problems
  inside triangulated simple polygons.
 {\em Algorithmica}, 2:209--233, 1987.

\bibitem{hur96ear}
Ferran Hurtado and Marc Noy.
 Ears of triangulations and catalan numbers.
 {\em Discrete Mathematics}, 149(1-3):319--324, 1996.

\bibitem{hur99gtri}
Ferran Hurtado and Marc Noy.
 Graph of triangulations of a convex polygon and tree of
  triangulations.
 {\em Comput. Geom.}, 13(3):179--188, 1999.

\bibitem{knuth74}
Donald~E. Knuth.
 Estimating the efficiency of backtrack programs.
 Technical report, Stanford, CA, USA, 1974.

\bibitem{lee83}
D.~T. Lee.
 Visibility of a simple polygon.
 {\em Computer Vision, Graphics, and Image Processing},
  22(2):207--221, 1983.

\bibitem{tet97mix}
Lisa McShine and Prasad Tetali.
 On the mixing time of the triangulation walk and other catalan
  structures.
 In {\em Randomization Methods in Algorithm Design, Proceedings of a
  {DIMACS} Workshop, Princeton, New Jersey, USA, December 12-14, 1997}, pages
  147--160, 1997.

\bibitem{mol97mix}
Michael S.~O. Molloy, Bruce~A. Reed, and William Steiger.
 On the mixing rate of the triangulation walk.
 In {\em Randomization Methods in Algorithm Design, Proceedings of a
  {DIMACS} Workshop, Princeton, New Jersey, USA, December 12-14, 1997}, pages
  179--190, 1997.

\bibitem{nis04}
Takao Nishizeki and Md.~Saidur Rahman.
 {\em Planar Graph Drawing}, volume~12 of {\em Lecture Notes Series on
  Computing}.
 World Scientific, 2004.

\bibitem{orou98}
Joseph O'Rourke.
 {\em Computational Geometry in C}.
 Cambridge University Press, New York, NY, USA, 2nd edition, 1998.

\bibitem{par11gentri}
Mohammad~Tanvir Parvez, Md.~Saidur Rahman, and Shin{-}Ichi Nakano.
 Generating all triangulations of plane graphs.
 {\em J. Graph Algorithms Appl.}, 15(3):457--482, 2011.

\bibitem{pou06sam}
Dominique Poulalhon and Gilles Schaeffer.
 Optimal coding and sampling of triangulations.
 {\em Algorithmica}, 46(3-4):505--527, 2006.

\bibitem{sai99}
Sadiq~M. Sait and Habib Youssef.
 {\em {VLSI} Physical Design Automation - Theory and Practice},
  volume~6 of {\em Lecture Notes Series on Computing}.
 World Scientific, 1999.

\bibitem{sha11ctri}
Micha Sharir and Adam Sheffer.
 Counting triangulations of planar point sets.
 {\em Electr. J. Comb.}, 18(1), 2011.

\bibitem{sha11deg}
Micha Sharir, Adam Sheffer, and Emo Welzl.
 On degrees in random triangulations of point sets.
 {\em J. Comb. Theory, Ser. {A}}, 118(7):1979--1999, 2011.

\bibitem{sha06tri}
Micha Sharir and Emo Welzl.
 Random triangulations of planar point sets.
 In {\em Proceedings of the 22nd {ACM} Symposium on Computational
  Geometry, Sedona, Arizona, USA, June 5-7, 2006}, pages 273--281, 2006.

\bibitem{sin89apx}
Alistair Sinclair and Mark Jerrum.
 Approximate counting, uniform generation and rapidly mixing markov
  chains.
 {\em Inf. Comput.}, 82(1):93--133, 1989.

\bibitem{stanley2015}
Richard~P. Stanley.
 {\em Catalan Numbers}.
 Cambridge University Press, 2015.
 \href {http://dx.doi.org/10.1017/CBO9781139871495}
  {\path{doi:10.1017/CBO9781139871495}}.

\bibitem{tar88}
Robert~Endre Tarjan and Christopher J.~Van Wyk.
 An o(n log log n)-time algorithm for triangulating a simple polygon.
 {\em {SIAM} J. Comput.}, 17(1):143--178, 1988.

\end{thebibliography}

\appendix

\section{Omitted proofs}
\label{sec:appa}

\begin{proof}[Proof of Theorem \ref{thm:bal}]
We will consider the 2 main of the BC-relations. The marginal cases are trivial to prove.
\begin{align*}
a_{n,0}&=a_{n-1,0}+a_{n,1} &\Leftrightarrow \\[1mm] 
N_{n,n-1}&=N_{n-1,n-2}+N_{n,n-2} &\Leftrightarrow \\[1mm] 
\frac{1}{n}{2n \choose n-1}&= \frac{1}{n-1}{2n-2 \choose n-2}+\frac{3}{2n-1}{2n-1 \choose n-2} &\Leftrightarrow \\[1mm] 
\frac{(2n)!}{n!(n+1)!}&= \frac{(2n-2)!}{(n-1)!n!}+\frac{3(2n-2)!}{(n-2)!(n+1)!} &\Leftrightarrow \\[1mm] 
\frac{(2n-1)(2n)}{(n+1)!}&= \frac{1}{(n-1)!}+\frac{3}{(n-2)!(n+1)} &\Leftrightarrow \\[1mm] 
\frac{2(2n-1)}{(n-1)(n+1)}&= \frac{1}{(n-1)}+\frac{3}{(n+1)} &\Leftrightarrow \\[1mm]
4n-2 &= (n+1)+3(n-1), \quad \quad \text{which holds.}
\end{align*}
Regarding $a_{n,m}=a_{n-1,m-1}+a_{n,m+1}$ we have:
\begin{align*}
N_{n,m}&=N_{n-1,m}+N_{n,m-1} &\Leftrightarrow \\[1mm] 
\frac{n+1-m}{n+1+m}{n+1+m \choose m}&= \frac{n-m}{n+m}{n+m \choose m}+\frac{n-m+2}{n+m}{n+m \choose m-1} &\Leftrightarrow \\[1mm] 
\frac{(n+1-m)(n+m)!}{m!(n+1)!}&= \frac{(n-m)(n+m)!}{(n+m)m!n!}+\frac{(n-m+2)(n+m)!}{(n+m)(m-1)!(n+1)!} &\Leftrightarrow \\[1mm]
\frac{(n+1-m)}{m(n+1)}&= \frac{(n-m)}{(n+m)m}+\frac{(n-m+2)}{(n+m)(n+1)} &\Leftrightarrow \\[1mm]
\frac{1}{m}-\frac{1}{n+1}&= \frac{1}{m}-\frac{2}{n+m}+\frac{(n-m+2)}{(n+m)(n+1)} &\Leftrightarrow \\[1mm]
\frac{2}{n+m}-\frac{1}{n+1}&=\frac{(n-m+2)}{(n+m)(n+1)} &\Leftrightarrow \\[1mm]
2(n+1)-(n+m)&=n-m+2, \quad \quad \text{which holds.}
\end{align*}
Therefore, we have completed our proof.
\end{proof}

\begin{proof}[Proof of Theorem \ref{thm:gen}]
Let $G$ be the generating function for the numbers $a_{i,j}$, i.e.
\[G(x,y)= \sum_{i=0}^{\infty} \sum_{j=0}^{\infty} a_{i,j} x^i y^j\] and let $B$ be the generating function for numbers $b_{i,j}=a_{i+1,j}$ for all $i,j\geq 0.$ We will first determine $B$, and then it is simply:
\begin{equation}
\label{6}
G=xB+1.
\end{equation}
Initially, let us break $B$ into sums of fixed $j$:
\begin{equation}
B=A_0+A_1+A_2+...= \sum_{k=0}^{\infty} A_k, \label{1}
\end{equation}
where for all $k\in \mathbb{N}$ $A_k=\sum_{i=0}^{\infty} b_{i,k} x^i y^k$. From the BC-relation \ref{rel:bc1} we obtain:
\begin{equation}\label{2}
A_0=\frac{1}{y} A_1 + x A_0 + 1,
\end{equation}
while from the recursive BC-relation \ref{rel:bc2} we obtain:
\begin{equation}\label{3}
\forall k>0 \ A_k=\frac{1}{y} A_{k+1} + xyA_{k-1}.
\end{equation}
We have also showed that the numbers $a_{i,0}$ are actually Catalan numbers, in particular $b_{i,0}=a_{i+1,0}=C_{i+1},\text{ for }  i\geq 0$. So, if $Cat$ is the generating function of the Catalan numbers, then:
\begin{equation}\label{4}
Cat=x A_0 +1 \Rightarrow A_0=\frac{Cat-1}{x}.
\end{equation}
It is well known that: 
\begin{equation}\label{5}
Cat(x)=\frac{1 - \sqrt{1 - 4 x}}{2 x}.
\end{equation}
In order to get the generating function $G$, plug equations \ref{2},\ref{3},\ref{4},\ref{5} into equation \ref{1}:
\begin{align*}
B&=\frac{1}{y} (A_1+A_2+...)+ xy (A_0+ A_1 +A_2 +...) + x A_0 +1 \\
&=\frac{1}{y}(G-A_0)+ xy G + x A_0 +1 \\
&=\frac{y+ (xy-1) A_0} {y-1-xy^2} =\frac{y+ (xy-1)\frac{Cat(x)-1}{x}}{y-1-xy^2}\\
&=\frac{y}{y-1-xy^2}+\frac{(xy-1)(1-\sqrt{1-4x}-2x)}{2x^2(y-1-xy^2)}.
\end{align*}
Finally, from relation \ref{6} he have:
\[G(x,y)=xB(x,y)+1=\frac{(1-xy) (1-\sqrt{1-4x}-2xy)} {2x(1-y+xy^2)}.\]
\end{proof}

Observe that we did not directly use the initial condition that $a_{n,n}=0, n\geq 1$; we actually did account for it, as it is hidden in the proof that the numbers $a_{n,0}$ correspond to Catalan numbers.

\section{Notes on the main sampling algorithm for convex triangulations}

For our convenience, we hereby adopt the notation $T_{n,-m}=T(K_{n,-m})$.


The left child of a $K_{n,-m}$ in our reduction is a graph on $n-1$ vertices, in other words, w.r.t.\ our initial proper vertex labeling, the labeling of the left child is not proper, as a vertex is missing. Therefore, we must \emph{relabel} this graph, and every left child graph of the tree. The following Rule gives the final ingredient for building the desirable reduction tree.

\begin{rul}
Let $K_{r,-m}$ be the left child of a node of a tree rooted at $K_n$. If $m=0$, then relabel the complete $K_r$ as desired. Else, place $r_0$ opposite to the first missing span-2 edge and complete with a proper relabeling. For each new graph, maintain a $1\times n$ table to indicate the mapping of the current to the initial labeling; current missing vertices of the parent graph can be marked (e.g.\ $-1$ entries). 
\label{rul:relabel}
\end{rul} 

\subsubsection{Properties of the reduction tree} 
\label{sec:treeprop}
In all, we have build a triangulation tree with the following properties:
\begin{itemize}
\item Every node has 2 children (left and right -- binary tree).
\item Every node indicating $K_{r,-m}$ needs $\mathcal{O}(n)$ time to be created and coded: store integers $r$ and $m$, the $1\times n$ mapping matrix for the vertex labeling, plus a 2 integer indicator for backtracking to the parent node: if it is a left child, hold $[v_m,-1]$, for a right child hold $[v_{m-1},v_{m+1}]$ where $v_m$ is the critical vertex of the parent node w.r.t.\ which the node branched (store as labeled in the parent node).  
\item For any node, the triangulations coded in the left subtree are disjoint to the ones of the right subtree, as they differ at the edge w.r.t.\ which the node branched. 
\item If a node is $K_3$, then stop and mark \emph{one triangulation} (green leaf). Due to the above, all green leaves are left children and there is a total of exactly $\bm{C_{n-2}}$ green leaves in the tree.
\item If a node is of the form $K_{r,-r+2}$, then stop and mark \emph{no triangulation} (red leaf)
\item Considering all the above, from any green leaf, backtracking to the root is equivalent to obtaining one specific triangulation; this is achieved in $\mathcal{O}(n)$ time. 
\end{itemize}

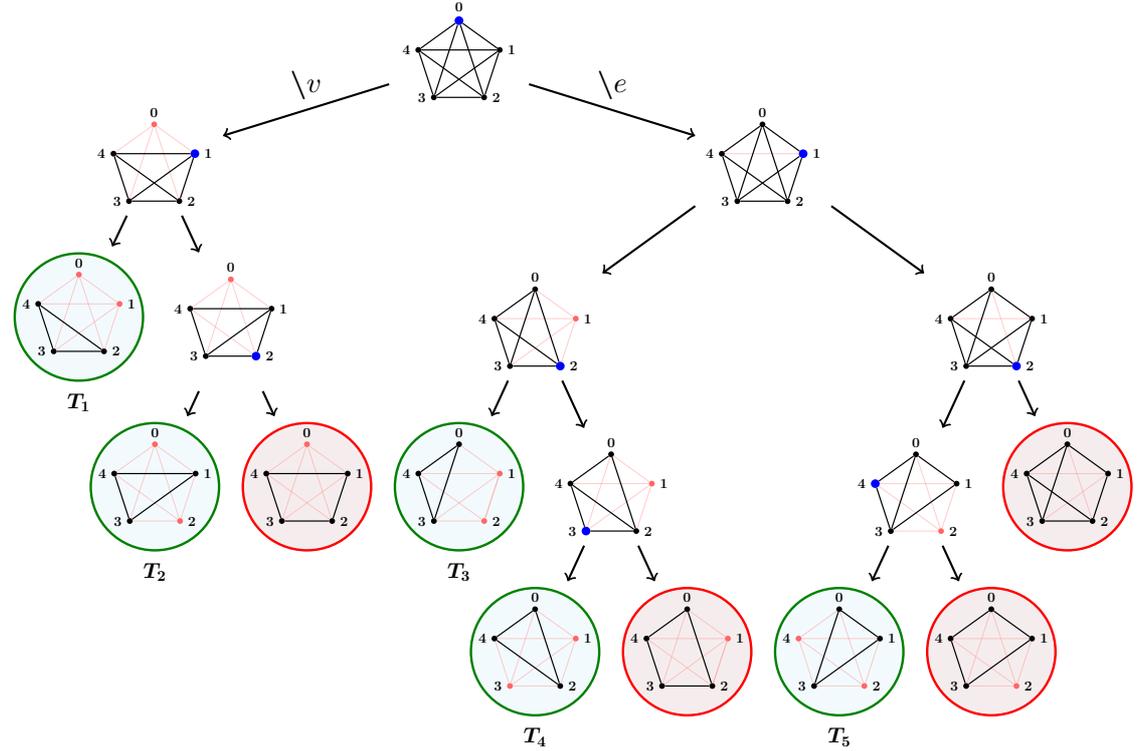
\begin{figure}[h]
\centering
\begin{tikzpicture}
 [scale=1.25,every node/.style={rectangle, draw=none, fill=none, scale=0.5625}]
\node (k5) at (0.2,0) {\begin{tikzpicture}
 [every node/.style={circle, draw=none, fill=black, scale=0.375}]
 \draw [draw=none, fill opacity=0.4] (0,0) circle (1.5cm);
  \node[label=above:{\Huge \bf 0}, fill=blue, scale=1.5] (n1) at (90:1) {};
   \node [label=right:{\Huge \bf 1}] (n2) at  (18:1) {};
  \node [label=right:{\Huge \bf 2}] (n3) at (-54:1) {};
   \node [label=left:{\Huge \bf 3}] (n4) at (-126:1) {};
   \node [label=left:{\Huge \bf 4}] (n5) at (162:1) {};
    \draw [thick] (n1)--(n2)--(n3)--(n4)--(n5)--(n1);
     \foreach \x/\y in {n1/n4,n4/n2,n2/n5,n5/n3,n1/n3}
           \draw [thick] (\x)--(\y);   
\end{tikzpicture}};
\node (k54) at (-3,-1) {\begin{tikzpicture}
 [every node/.style={circle, draw=none, fill=black, scale=0.375}]
   \node[fill=red!60, label=above:{\Huge \bf 0}] (n1) at (90:1) {};
   \node [label=right:{\Huge \bf 1}, fill=blue, scale=1.5] (n2) at  (18:1) {};
  \node [label=right:{\Huge \bf 2}] (n3) at (-54:1) {};
   \node [label=left:{\Huge \bf 3}] (n4) at (-126:1) {};
   \node [label=left:{\Huge \bf 4}] (n5) at (162:1) {};
     \foreach \x/\y in {n1/n4,n1/n3, n1/n2,n1/n5} 
         \draw [red!40, opacity=0.5] (\x)--(\y);      
    \draw [thick] (n2)--(n3)--(n4)--(n5);
     \foreach \x/\y in {n4/n2,n2/n5,n5/n3}
           \draw [thick] (\x)--(\y);   
    
\end{tikzpicture}};
\node (k551) at (3.4,-1) {\begin{tikzpicture}
 [every node/.style={circle, draw=none, fill=black, scale=0.375}]
   \node[thick, label=above:{\Huge \bf 0}] (n1) at (90:1) {};
   \node [label=right:{\Huge \bf 1}, fill=blue, scale=1.5] (n2) at  (18:1) {};
  \node [label=right:{\Huge \bf 2}] (n3) at (-54:1) {};
   \node [label=left:{\Huge \bf 3}] (n4) at (-126:1) {};
   \node [label=left:{\Huge \bf 4}] (n5) at (162:1) {};
     \foreach \x/\y in {n2/n5} 
         \draw [red!40, opacity=0.5] (\x)--(\y);      
    \draw [thick] (n2)--(n3)--(n4)--(n5);
     \foreach \x/\y in {n4/n2,n5/n3,n1/n4,n1/n3, n1/n2,n1/n5}
           \draw [thick] (\x)--(\y);   
    
\end{tikzpicture}};
\node [label=below:{\Large $\bm{T_1}$}] (k543) at (-3.8,-2.7) {\begin{tikzpicture}
 [every node/.style={circle, draw=none, fill=black, scale=0.375}]
 \draw [draw=keyg, ultra thick, fill=cyan!10, fill opacity=0.4] (0,0) circle (1.5cm);
   \node[fill=red!60, label=above:{\Huge \bf 0}] (n1) at (90:1) {};
   \node [fill=red!60, label=right:{\Huge \bf 1}] (n2) at  (18:1) {};
  \node [label=right:{\Huge \bf 2}] (n3) at (-54:1) {};
   \node [label=left:{\Huge \bf 3}] (n4) at (-126:1) {};
   \node [label=left:{\Huge \bf 4}] (n5) at (162:1) {};
     \foreach \x/\y in {n1/n4,n1/n3, n1/n2,n1/n5,n2/n3, n2/n5,n2/n4} 
         \draw [red!40, opacity=0.5] (\x)--(\y);      
    \draw [thick] (n3)--(n4)--(n5);
     \foreach \x/\y in {n3/n5}
           \draw [thick] (\x)--(\y);   
    
\end{tikzpicture}};
\node (k5441) at (-2.2,-2.75) {\begin{tikzpicture}
 [every node/.style={circle, draw=none, fill=black, scale=0.375}]
 \draw [draw=none,fill opacity=0.4] (0,0) circle (1.5cm);
   \node[fill=red!60, label=above:{\Huge \bf 0}] (n1) at (90:1) {};
   \node [label=right:{\Huge \bf 1}] (n2) at  (18:1) {};
  \node [label=right:{\Huge \bf 2}, fill=blue, scale=1.5] (n3) at (-54:1) {};
   \node [label=left:{\Huge \bf 3}] (n4) at (-126:1) {};
   \node [label=left:{\Huge \bf 4}] (n5) at (162:1) {};
     \foreach \x/\y in {n1/n4,n1/n3, n5/n3, n1/n2,n1/n5} 
         \draw [red!40, opacity=0.5] (\x)--(\y);      
    \draw [thick] (n2)--(n3)--(n4)--(n5);
     \foreach \x/\y in {n2/n5,n4/n2}
           \draw [thick] (\x)--(\y);   
    
\end{tikzpicture}};
\node [label=below:{\Large $\bm{T_2}$}] (k54413) at (-3,-4.5) {\begin{tikzpicture}
 [every node/.style={circle, draw=none, fill=black, scale=0.375}]
 \draw [draw=keyg, ultra thick, fill=cyan!10, fill opacity=0.4] (0,0) circle (1.5cm);
   \node[fill=red!60, label=above:{\Huge \bf 0}] (n1) at (90:1) {};
   \node [thick, label=right:{\Huge \bf 1}] (n2) at  (18:1) {};
  \node [fill=red!60, label=right:{\Huge \bf 2}] (n3) at (-54:1) {};
   \node [label=left:{\Huge \bf 3}] (n4) at (-126:1) {};
   \node [label=left:{\Huge \bf 4}] (n5) at (162:1) {};
     \foreach \x/\y in {n1/n4,n1/n3, n5/n3, n1/n2,n1/n5, n2/n3, n3/n4} 
         \draw [red!40, opacity=0.5] (\x)--(\y);      
    \draw [thick] (n4)--(n5);
     \foreach \x/\y in {n2/n5,n4/n2}
           \draw [thick] (\x)--(\y);   
    
\end{tikzpicture}};
\node (k544142) at (-1.4,-4.5) {\begin{tikzpicture}
 [every node/.style={circle, draw=none, fill=black, scale=0.375}]
 \draw [draw=red, ultra thick, fill=key!20, fill opacity=0.4] (0,0) circle (1.5cm);
   \node[fill=red!60, label=above:{\Huge \bf 0}] (n1) at (90:1) {};
   \node [label=right:{\Huge \bf 1}] (n2) at  (18:1) {};
  \node [label=right:{\Huge \bf 2}] (n3) at (-54:1) {};
   \node [label=left:{\Huge \bf 3}] (n4) at (-126:1) {};
   \node [label=left:{\Huge \bf 4}] (n5) at (162:1) {};
     \foreach \x/\y in {n1/n4,n1/n3, n5/n3, n1/n2,n1/n5,n4/n2} 
         \draw [red!40, opacity=0.5] (\x)--(\y);      
    \draw [thick] (n2)--(n3)--(n4)--(n5);
     \foreach \x/\y in {n2/n5}
           \draw [thick] (\x)--(\y);   
    
\end{tikzpicture}};
\node (k5514) at (1,-2.75) {\begin{tikzpicture}
 [every node/.style={circle, draw=none, fill=black, scale=0.375}]
   \node[label=above:{\Huge \bf 0}] (n2) at (90:1) {};
   \node [fill=red!60, label=right:{\Huge \bf 1}] (n1) at  (18:1) {};
  \node [label=right:{\Huge \bf 2}, fill=blue, scale=1.5] (n3) at (-54:1) {};
   \node [label=left:{\Huge \bf 3}] (n4) at (-126:1) {};
   \node [label=left:{\Huge \bf 4}] (n5) at (162:1) {};
     \foreach \x/\y in {n1/n4,n1/n3, n1/n2,n1/n5} 
         \draw [red!40, opacity=0.5] (\x)--(\y);      
    \draw [thick] (n2)--(n3)--(n4)--(n5);
     \foreach \x/\y in {n4/n2,n2/n5,n5/n3}
           \draw [thick] (\x)--(\y);   
    
\end{tikzpicture}};
\node [label=below:{\Large $\bm{T_3}$}] (k55143) at (0.2,-4.5) {\begin{tikzpicture}
 [every node/.style={circle, draw=none, fill=black, scale=0.375}]
 \draw [draw=keyg, ultra thick, fill=cyan!10, fill opacity=0.4] (0,0) circle (1.5cm);
   \node[label=above:{\Huge \bf 0}] (n2) at (90:1) {};
   \node [fill=red!60, label=right:{\Huge \bf 1}] (n1) at  (18:1) {};
  \node [fill=red!60, label=right:{\Huge \bf 2}] (n3) at (-54:1) {};
   \node [label=left:{\Huge \bf 3}] (n4) at (-126:1) {};
   \node [label=left:{\Huge \bf 4}] (n5) at (162:1) {};
     \foreach \x/\y in {n1/n4,n1/n3, n1/n3, n1/n2,n1/n5,n5/n3,n3/n4} 
         \draw [red!40, opacity=0.5] (\x)--(\y);      
    \draw [thick] (n4)--(n5);
     \foreach \x/\y in {n4/n2,n2/n5}
           \draw [thick] (\x)--(\y);   
    
\end{tikzpicture}};
\node (k551441) at (1.8,-4.5) {\begin{tikzpicture}
 [every node/.style={circle, draw=none, fill=black, scale=0.375}]
   \node[label=above:{\Huge \bf 0}] (n2) at (90:1) {};
   \node [fill=red!60, label=right:{\Huge \bf 1}] (n1) at  (18:1) {};
  \node [label=right:{\Huge \bf 2}] (n3) at (-54:1) {};
   \node [label=left:{\Huge \bf 3}, fill=blue, scale=1.5] (n4) at (-126:1) {};
   \node [label=left:{\Huge \bf 4}] (n5) at (162:1) {};
     \foreach \x/\y in {n1/n4,n1/n3,n4/n2, n1/n2,n1/n5} 
         \draw [red!40, opacity=0.5] (\x)--(\y);      
    \draw [thick] (n2)--(n3)--(n4)--(n5);
     \foreach \x/\y in {n2/n5,n5/n3}
           \draw [thick] (\x)--(\y);   
    
\end{tikzpicture}};
\node [label=below:{\Large $\bm{T_4}$}] (k5514413) at (1,-6.25) {\begin{tikzpicture}
 [every node/.style={circle, draw=none, fill=black, scale=0.375}]
 \draw [draw=keyg, ultra thick, fill=cyan!10, fill opacity=0.4] (0,0) circle (1.5cm);
   \node[label=above:{\Huge \bf 0}] (n2) at (90:1) {};
   \node [fill=red!60, label=right:{\Huge \bf 1}] (n1) at  (18:1) {};
  \node [label=right:{\Huge \bf 2}] (n3) at (-54:1) {};
   \node [fill=red!60, label=left:{\Huge \bf 3}] (n4) at (-126:1) {};
   \node [label=left:{\Huge \bf 4}] (n5) at (162:1) {};
     \foreach \x/\y in {n1/n4,n1/n3,n4/n2, n1/n2,n1/n5,n4/n3,n4/n5} 
         \draw [red!40, opacity=0.5] (\x)--(\y);      
    \draw [thick] (n2)--(n3);
     \foreach \x/\y in {n2/n5,n5/n3}
           \draw [thick] (\x)--(\y);   
    
\end{tikzpicture}};
\node (k55144142) at (2.6,-6.25) {\begin{tikzpicture}
 [every node/.style={circle, draw=none, fill=black, scale=0.375}]
 \draw [draw=red, ultra thick, fill=key!20, fill opacity=0.4] (0,0) circle (1.5cm);
   \node[label=above:{\Huge \bf 0}] (n2) at (90:1) {};
   \node [fill=red!60, label=right:{\Huge \bf 1}] (n1) at  (18:1) {};
  \node [label=right:{\Huge \bf 2}] (n3) at (-54:1) {};
   \node [label=left:{\Huge \bf 3}] (n4) at (-126:1) {};
   \node [label=left:{\Huge \bf 4}] (n5) at (162:1) {};
     \foreach \x/\y in {n1/n4,n1/n3,n4/n2, n1/n2,n1/n5,n5/n3,n3/n1,n3/n4} 
         \draw [red!40, opacity=0.5] (\x)--(\y);      
    \draw [thick] (n2)--(n3)--(n4)--(n5);
     \foreach \x/\y in {n2/n5}
           \draw [thick] (\x)--(\y);   
    
\end{tikzpicture}};
\node (k55152) at (5.8,-2.75) {\begin{tikzpicture}
 [every node/.style={circle, draw=none, fill=black, scale=0.375}]
   \node[label=above:{\Huge \bf 0}] (n1) at (90:1) {};
   \node [label=right:{\Huge \bf 1}] (n2) at  (18:1) {};
  \node [label=right:{\Huge \bf 2}, fill=blue, scale=1.5] (n3) at (-54:1) {};
   \node [label=left:{\Huge \bf 3}] (n4) at (-126:1) {};
   \node [label=left:{\Huge \bf 4}] (n5) at (162:1) {};
     \foreach \x/\y in {n2/n5,n1/n3} 
         \draw [red!40, opacity=0.5] (\x)--(\y);      
    \draw [thick] (n2)--(n3)--(n4)--(n5);
     \foreach \x/\y in {n4/n2,n5/n3,n1/n4, n1/n2,n1/n5}
           \draw [thick] (\x)--(\y);   
    
\end{tikzpicture}};
\node (k5515253) at (6.6,-4.5) {\begin{tikzpicture}
 [every node/.style={circle, draw=none, fill=black, scale=0.375}]
 \draw [draw=red, ultra thick, fill=key!20, fill opacity=0.4] (0,0) circle (1.5cm);
   \node[label=above:{\Huge \bf 0}] (n1) at (90:1) {};
   \node [label=right:{\Huge \bf 1}] (n2) at  (18:1) {};
  \node [label=right:{\Huge \bf 2}] (n3) at (-54:1) {};
   \node [label=left:{\Huge \bf 3}] (n4) at (-126:1) {};
   \node [label=left:{\Huge \bf 4}] (n5) at (162:1) {};
     \foreach \x/\y in {n4/n2,n2/n5,n1/n3} 
         \draw [red!40, opacity=0.5] (\x)--(\y);      
    \draw [thick] (n2)--(n3)--(n4)--(n5);
     \foreach \x/\y in {n5/n3,n1/n4, n1/n2,n1/n5}
           \draw [thick] (\x)--(\y);   
    
\end{tikzpicture}};
\node (k5515241) at (5,-4.5) {\begin{tikzpicture}
 [every node/.style={circle, draw=none, fill=black, scale=0.375}]
   \node[thick, label=above:{\Huge \bf 0}] (n1) at (90:1) {};
   \node [thick,label=right:{\Huge \bf 1}] (n2) at  (18:1) {};
  \node [fill=red!60, label=right:{\Huge \bf 2}] (n3) at (-54:1) {};
   \node [label=left:{\Huge \bf 3}] (n4) at (-126:1) {};
   \node [label=left:{\Huge \bf 4}, fill=blue, scale=1.5] (n5) at (162:1) {};
     \foreach \x/\y in {n2/n5,n1/n3,n5/n3,n3/n4,n2/n3} 
         \draw [red!40, opacity=0.5] (\x)--(\y);      
    \draw [thick] (n4)--(n5);
     \foreach \x/\y in {n4/n2,n1/n4, n1/n2,n1/n5}
           \draw [thick] (\x)--(\y);   
    
\end{tikzpicture}};
\node [label=below:{\Large $\bm{T_5}$}] (k55152413) at (4.2,-6.25) {\begin{tikzpicture}
 [every node/.style={circle, draw=none, fill=black, scale=0.375}]
 \draw [draw=keyg, ultra thick, fill=cyan!10, fill opacity=0.4] (0,0) circle (1.5cm);
   \node[label=above:{\Huge \bf 0}] (n1) at (90:1) {};
   \node [label=right:{\Huge \bf 1}] (n2) at  (18:1) {};
  \node [fill=red!60, label=right:{\Huge \bf 2}] (n3) at (-54:1) {};
   \node [label=left:{\Huge \bf 3}] (n4) at (-126:1) {};
   \node [fill=red!60, label=left:{\Huge \bf 4}] (n5) at (162:1) {};
     \foreach \x/\y in {n2/n5,n1/n3,n5/n3,n1/n5,n4/n5,n2/n3,n3/n4} 
         \draw [red!40, opacity=0.5] (\x)--(\y);      
     \foreach \x/\y in {n4/n2,n1/n4, n1/n2}
           \draw [thick] (\x)--(\y);   
    
\end{tikzpicture}};
\node (k551524142) at (5.8,-6.25) {\begin{tikzpicture}
 [every node/.style={circle, draw=none, fill=black, scale=0.375}]
 \draw [draw=red, ultra thick, fill=key!20, fill opacity=0.4] (0,0) circle (1.5cm);
   \node[label=above:{\Huge \bf 0}] (n1) at (90:1) {};
   \node [label=right:{\Huge \bf 1}] (n2) at  (18:1) {};
  \node [fill=red!60, label=right:{\Huge \bf 2}] (n3) at (-54:1) {};
   \node [label=left:{\Huge \bf 3}] (n4) at (-126:1) {};
   \node [label=left:{\Huge \bf 4}] (n5) at (162:1) {};
     \foreach \x/\y in {n2/n5,n1/n3,n5/n3,n1/n4,n2/n3,n3/n4} 
         \draw [red!40, opacity=0.5] (\x)--(\y);      
    \draw [thick] (n4)--(n5);
     \foreach \x/\y in {n4/n2,n1/n2,n1/n5}
           \draw [thick] (\x)--(\y);   
    
\end{tikzpicture}};

\foreach \x/\y in {k54/k543, k54/k5441, k5441/k54413, k5441/k544142,k5514/k55143, k5514/k551441, k551441/k5514413, k551441/k55144142, k55152/k5515241,k55152/k5515253, k5515241/k55152413, k5515241/k551524142}
           \draw [thick, ->] (\x)--(\y);     
   \foreach \x/\y in {k551/k5514, k551/k55152}
                \draw [thick, ->] (\x)--(\y);    
                
\draw [thick, ->] (k5)--(k54) node [pos=0.5, above, scale=1.75] {$\bm \setminus v$};
\draw [thick, ->] (k5)--(k551) node [pos=0.5, above, scale=1.75] {$\bm \setminus e$};
%
\end{tikzpicture}
\caption{Full reduction tree for the convex pentagon. Blue color indicates the working on node. Relabelings are not marked. Each of its 5 triangulations is encoded in a different green leaf.}
\label{fig:k5fullred}
\end{figure}

Figure \ref{fig:k5fullred} shows the full reduction tree for the convex $K_5$. To give a for instance, consider the green leaf encoding triangulation $T_3$. $T_3$ includes the triangle $v_0v_3v_4$ (end of reduction); as a left child of its parent node, which branched w.r.t.\ $v_2$/$v_0v_3$, it includes $v_0v_3$ as an span-2 edge of the parent $K_4$ subgraph, so triangle $v_0v_2v_3$ is in $T_4$. The $K_4$ parent node is a left child of the $K_{5,-1}$ node which branched w.r.t.\ $v_1$, therefore the triangulation does also include triangle $v_0v_1v_2$; finally (though not necessary), the $K_{5,-1}$ node being a right child of the $K_5$ which branched w.r.t.\ $v_0$ suggests that $v_4v_1$ is missing from the $T_4$. 

Note that this is a small example, and some leaves ($T_2$, $T_4$, $T_5$) already indicate unambiguously the encoded triangulation. Observe also that even if it is not directly visible that $T_1$ differs from $T_4$, our simple branching rule guarantees that $T_1$ includes $v_4v_1$ as an edge, while $T_4$ does not, as they belong to a different subtree of the root node which branched w.r.t.\ $v_0$/$v_4v_1$.

\end{document}